\def\dd{\mathinner{.\,.}}
\newcommand{\cO}{\mathcal{O}}
\newcommand{\RNK}{\textsf{RNK}}
\newcommand{\APP}{\textsf{APP}}
\newcommand{\PRE}{\textsf{PRE}}
\newcommand{\FTR}{\textsf{FTR}}
\newcommand{\pred}{\textsf{pred}}
\newcommand{\rank}{\textsf{rank}}
\newcommand{\select}{\textsf{select}}
\newcommand{\ISAW}{\textsf{ISAW}}
\DeclareMathOperator*{\argmin}{arg\,min}
\newcommand{\OP}{\textsf{SP}}
\newcommand{\CP}{\textsf{EP}}
\newcommand{\defproblem}[3]{
  \vspace{2mm}
\noindent\fbox{
  \begin{minipage}{0.96\textwidth}
  \textsc{#1}\\
  {\bf{Input:}} #2  \\
  {\bf{Output:}} #3
  \end{minipage}
  }
  \vspace{2mm}
}
\theoremstyle{plain}
\newtheorem{theorem}{Theorem}
\newtheorem{lemma}{Lemma}
\newtheorem{proposition}{Proposition}
\newtheorem{fact}[theorem]{Fact}
\newtheorem*{claim*}{Claim}
\theoremstyle{definition}
\newtheorem{definition}{Definition}
\newtheorem{example}{Example}
\begin{document}
\begin{frontmatter}
\title{Internal Shortest Absent Word Queries in Constant Time and Linear Space\tnoteref{t1}}
\author[1]{Golnaz Badkobeh}\ead{g.badkobeh@gold.ac.uk}
\author[2]{Panagiotis Charalampopoulos\footnote{Supported by the Israel Science Foundation grant 592/17.}}\ead{panagiotis.charalampopoulos@post.idc.ac.il}
\author[3]{Dmitry Kosolobov\footnote{Supported by the Ministry of Science and Higher Education of the Russian Federation (Ural Mathematical Center project No. 075-02-2021-1387)}}\ead{dkosolobov@mail.ru}
\author[4,5]{Solon P.~Pissis}\ead{solon.pissis@cwi.nl}
\address[1]{Department of Computing, Goldsmiths University of London, UK}
\address[2]{Efi Arazi School of Computer Science, The Interdisciplinary Center Herzliya, Israel}
\tnotetext[t1]{The present paper is an extended and improved version of an earlier text that appeared in the 32nd Annual Symposium on Combinatorial Pattern Matching, CPM 2021~\cite{self:CPM}.}
\address[3]{Ural Federal University, Ekaterinburg, Russia}
\address[4]{CWI, Amsterdam, The Netherlands}
\address[5]{Vrije Universiteit, Amsterdam, The Netherlands}
\begin{abstract}
Given a string $T$ of length $n$ over an alphabet $\Sigma\subset \{1,2,\ldots,n^{\cO(1)}\}$ of size $\sigma$, we are to preprocess $T$ so that given a range $[i,j]$, we can return a representation of a shortest string over $\Sigma$ that is absent in the fragment $T[i]\cdots T[j]$ of~$T$. We present an $\cO(n)$-space data structure that answers such queries in constant time and can be constructed in $\cO(n\log_\sigma n)$ time.
\end{abstract}
\begin{keyword}
string algorithms \sep internal queries \sep shortest absent word \sep bit parallelism
\end{keyword}
\end{frontmatter}

\section{Introduction}\label{sec:intro}

Range queries are a classic data structure topic~\cite{Yao1982,DBLP:journals/siamcomp/BerkmanV93,DBLP:conf/latin/BenderF00}. In 1d, a range query $q=f(A,i,j)$ on an array of $n$ elements over some set $U$, denoted by $A[1\dd n]$, takes two indices $1 \leq i \leq j \leq n$, a function $f$ defined over arrays of elements of $U$, and outputs $f(A[i\dd j]) = f(A[i],\ldots,A[j])$.
Range query data structures in 1d
can thus be viewed as data structures answering queries on a string in the internal setting, where $U$ is the considered alphabet.

Internal queries on a string have received much attention in recent years. In the internal setting, we are asked to preprocess a string $T$ of length $n$ over an alphabet $\Sigma$ of size $\sigma$, so that queries about substrings of $T$ can be answered efficiently. Note that an arbitrary substring of $T$ can be encoded in $\cO(1)$ words of space by the indices $i,j$ of its occurrence as a fragment $T[i]\cdots T[j]=T[i \dd j]$ of $T$. Data structures for answering internal queries are interesting in their own right, but also have numerous applications in the design of algorithms and (more sophisticated) data structures. Because of these numerous applications, we usually place particular emphasis on the construction time---other than on the tradeoff between space and query time, which is the main focus in the classic data structure literature.

In data structures on strings it is typically assumed that the input alphabet is integer and polynomially bounded, i.e., it is a subset of $\{1,2,\ldots,n^{\cO(1)}\}$ where $n$ is the length of the input string $T$.
One of the most widely-used internal queries is that of asking for the \emph{longest common prefix} of two suffixes $T[i \dd n]$ and $T[j \dd n]$ of~$T$.
The classic data structure for this problem~\cite{DBLP:journals/jcss/LandauV88} consists of the suffix tree of $T$~\cite{DBLP:conf/focs/Farach97} and a lowest common ancestor data structure~\cite{DBLP:journals/siamcomp/HarelT84} over the suffix tree. It occupies $\cO(n)$ space, it can be constructed in $\cO(n)$ time, and it answers queries in $\cO(1)$ time.
In the word RAM model of computation with word size $\Theta(\log n)$ bits the construction time is not necessarily optimal when the input alphabet is $\{1,2,\ldots,\sigma\}$ and the string is packed into $\cO(n/\log_\sigma n)$ machine words. A sequence of works~\cite{DBLP:conf/mfcs/TanimuraNBIT17,DBLP:conf/cpm/MunroNN20,DBLP:conf/soda/BirenzwigeGP20} has culminated in the recent optimal data structure of Kempa and Kociumaka~\cite{DBLP:conf/stoc/KempaK19}: it occupies $\cO(n/\log_\sigma n)$ space, it can be constructed in $\cO(n/\log_\sigma n)$ time, and it answers queries in $\cO(1)$ time.

Another fundamental problem in this setting is the \emph{internal pattern matching} (IPM) problem.
It consists in preprocessing $T$ so that we can efficiently compute the occurrences of a substring $U$ of $T$ in another substring $V$ of $T$.
For the decision version of the IPM problem,
Keller et al.~\cite{DBLP:journals/tcs/KellerKFL14} presented a data structure of nearly-linear size supporting sublogarithmic-time queries.
Kociumaka et al.~\cite{DBLP:conf/soda/KociumakaRRW15}
presented a data structure of linear size supporting constant-time queries when the ratio between the lengths of $V$ and $U$ is bounded by a constant.
The $\cO(n)$-time construction algorithm of the latter data structure was derandomized in~\cite{tomeksthesis}.
In fact, Kociumaka et al.~\cite{DBLP:conf/soda/KociumakaRRW15}, using their efficient IPM queries as a subroutine, managed to show efficient solutions for other internal problems, such as for computing the periods of a substring (\emph{period queries}, introduced in~\cite{DBLP:conf/spire/KociumakaRRW12}), and for checking whether two substrings are rotations of one another
(\emph{cyclic equivalence queries}).
Other problems that have been studied in the internal setting include string alignment~\cite{DBLP:journals/mics/Tiskin08,DBLP:journals/corr/abs-2103-03294}, approximate pattern matching~\cite{DBLP:conf/focs/Charalampopoulos20}, dictionary matching~\cite{DBLP:conf/isaac/Charalampopoulos19,DBLP:conf/cpm/Charalampopoulos20}, longest common substring~\cite{DBLP:journals/algorithmica/AmirCPR20}, counting palindromes~\cite{DBLP:conf/spire/RubinchikS17}, range longest common prefix~\cite{DBLP:journals/jcss/AmirALLLP14,DBLP:conf/cocoon/Abedin0HNSST18,matsuda2020compressed,DBLP:journals/fuin/GangulyPST18}, the computation of the lexicographically minimal or maximal suffix, and minimal rotation~\cite{DBLP:journals/tcs/BabenkoGKKS16,DBLP:conf/cpm/Kociumaka16}, as well as of the lexicographically $k$th suffix~\cite{DBLP:conf/soda/BabenkoGKS15}.
We refer the interested reader to the Ph.D dissertation of Kociumaka~\cite{tomeksthesis}, for a nice exposition.

In this work, we extend this line of research by investigating the following basic internal query, which, to the best of our knowledge, has not been studied previously. Given a string $T$ of length $n$ over an alphabet $\Sigma\subset \{1,2,\ldots,n^{\cO(1)}\}$, preprocess $T$ so that given a range $[i,j]$, we can return a shortest string over $\Sigma$ that does not occur in $T[i\dd j]$.
The latter shortest string is also known as a shortest absent word in the literature.
We work on the standard unit-cost word RAM model with machine word-size $w= \Theta(\log n)$ bits. We measure the space used by our algorithms and data structures in machine words, unless stated otherwise. We assume that we have random access to $T$ and so our algorithms return a constant-space representation of a shortest string (a witness) consisting of a substring of $T$ and a letter.
A na\"ive solution for this problem precomputes a table of size $\cO(n^2)$ that stores the answer for every possible query $[i,j]$. Our main result is the following theorem.

\begin{restatable}{theorem}{mainthm}\label{the:main}
Given a string $T$ of length $n$ over an alphabet $\Sigma\subset \{1,2,\ldots,n^{\cO(1)}\}$ of size~$\sigma$, we can construct in $\cO(n \log_{\sigma} n)$ time a data structure of size $\cO(n)$ that, for any given query $[a,b]$, can compute in $\cO(1)$ time a shortest string over $\Sigma$ that does not occur in $T[a\dd b]$.
\end{restatable}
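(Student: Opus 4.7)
The approach exploits the basic fact that the shortest absent word in any fragment $T[a\dd b]$ has length at most $\lceil\log_\sigma(b-a+1)\rceil+1=\cO(\log_\sigma n)$, since if $\sigma^k>b-a-k+2$ then the at most $b-a-k+2$ length-$k$ substrings of $T[a\dd b]$ cannot cover $\Sigma^k$. Setting $L=\cO(\log_\sigma n)$, I only need to consider candidate answer lengths $k\in\{1,\dots,L\}$. For each such $k$ I encode every length-$k$ substring $T[i\dd i+k-1]$ as the integer $v_i^{(k)}\in[0,\sigma^k-1]$ given by its base-$\sigma$ expansion; since $k\log\sigma=\cO(\log n)$, each such integer fits in $\cO(1)$ machine words and can be obtained in $\cO(1)$ time from a packed representation of $T$.

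For each $k$ the plan is to tabulate the \emph{completion function}
\[
c_k(a)=\min\{\,b\ge a+k-1 : \{v_a^{(k)},\dots,v_{b-k+1}^{(k)}\}=[0,\sigma^k-1]\,\},
\]
setting $c_k(a)=+\infty$ when the set on the right is never attained; then an absent length-$k$ word exists in $T[a\dd b]$ iff $b<c_k(a)$. A short argument (the length-$(k{-}1)$ prefixes of $\Sigma^k$ exhaust $\Sigma^{k-1}$) shows $c_1(a)<c_2(a)<\dots<c_L(a)$, so the optimum answer length $k^\star$ is located by a predecessor query for $b$ in this sorted sequence of $\cO(\log_\sigma n)=\cO(w)$ keys, which can be carried out in $\cO(1)$ time by fusion-tree-style word-level comparisons. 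To output a concrete missing word I further maintain, for each $(a,k)$, enough information to name a missing integer in $[0,\sigma^k-1]$ via rank/select on the ``coverage'' bit-vector of values appearing in $T[a\dd c_k(a)+k-1]$, whose total length summed over $k\le L$ is $\sum_{k\le L}\sigma^k=\cO(n)$ bits; such an integer is then expanded back into the required ``occurrence in $T$ plus last letter'' representation in $\cO(1)$ time.

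The main obstacle I expect is meeting the $\cO(n\log_\sigma n)$ construction budget and $\cO(n)$ space for these $\Theta(n\log_\sigma n)$ tables. The plan is, for each fixed $k$, to slide $a$ from $1$ upward and exploit the fact that $c_k(a)$ is monotone non-decreasing in $a$ (extending the window on the left can only add values), so the right endpoint of the completion window only moves forward and the total work for updating coverage across all $a$ is $\cO(n)$ per length and $\cO(n\log_\sigma n)$ overall. The delicate part is the witness structure: a naive per-$(a,k)$ bit-vector would blow up the space, so consecutive values of $a$ must share almost all of their coverage information via persistence, and the rank/select queries must be realised with bit-parallel word operations, so that the aggregate space stays at $\cO(n)$ words while queries remain $\cO(1)$.
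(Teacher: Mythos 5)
Your skeleton is essentially the paper's ``first approach'': your $c_k(a)$ is the left-to-right mirror of the paper's $\FTR[k][b]$ (the paper fixes the right endpoint and records the rightmost admissible left endpoint), the strict monotonicity you invoke is the same, and the query is the same constant-time predecessor search among $\cO(\log_\sigma n)$ keys. The monotone values themselves can indeed be packed into $\cO(n)$ bits per length via a unary/select encoding, exactly as the paper does with its arrays $B_j$. The problem is that none of this gets you to $\cO(n)$ \emph{words} overall, and the step where you wave at ``fusion-tree-style word-level comparisons'' is precisely where the real difficulty lives. A predecessor structure over $L=\Theta(\log_\sigma n)$ keys of $w$ bits each costs $\Theta(L)$ words per position with ordinary fusion trees ($\Theta(n\log_\sigma n)$ words total) and still $\Theta(L\log w)$ bits per position with succinct fusion trees ($\Theta(n\log_\sigma \log n)$ words total, e.g.\ $\Theta(n\log\log n)$ words for $\sigma=2$). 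The paper closes this gap with an ingredient your proposal has no analogue of: a combinatorial argument (Lemmas~\ref{lem:long}--\ref{lem:rev}, via the periodicity lemma) showing that appending $|Y|$ letters to a string can raise the shortest-absent-word length by only $\cO(1+\log_\sigma(|Y|/\lambda))$. This licenses truncating the per-position predecessor structures to their first $\cO(\log n/\log\log n)$ entries everywhere except at every $\lfloor\log\log n\rfloor$-th position, and answering the remaining queries by rounding $b$ up to such a position and then testing $\cO(1)$ candidate lengths with the linear-space order-$j$-fragment structure of \cref{lem:fixedj}. Without this (or some substitute), your space bound does not come down to $\cO(n)$.

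The witness mechanism is a second genuine gap. Per-position coverage bit-vectors over $[0,\sigma^k-1]$ cost $\Theta(\sum_{k\le L}\sigma^k)=\Theta(n)$ bits \emph{for each} starting position $a$, i.e.\ $\Theta(n^2)$ bits in aggregate, and ``persistence'' is not a known primitive that preserves both $\cO(1)$-time rank/select and $\cO(n)$ total space across $n$ versions. The paper avoids maintaining any coverage sets: a witness of length $j$ is read off from an RMQ over the array $\PRE_j$ (locating the length-$j$ substring whose previous occurrence is leftmost) or from the length-$j$ suffix of a minimal order-$j$ fragment, both of which are $\cO(n)$-bit structures per length. You would need to replace your witness plan with something of this kind.
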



In an earlier conference version of the present paper~\cite{self:CPM}, we have obtained a weaker result: a data structure of size $\cO((n/k)\cdot \log\log_\sigma n)$ that can answer queries in $\cO(\log\log_\sigma k)$ time, where $k$ is a user-defined parameter from $[1, \log\log_\sigma n]$.
The improved data structure presented in this manuscript combines ideas from the conference version and the utilization of succinct fusion trees introduced by Grossi et al.~\cite{grossi_et_al:LIPIcs:2009:1847}.

In the related \emph{range shortest unique substring} problem, defined by Abedin et al.~\cite{DBLP:journals/algorithms/Abedin0PT20}, the task is to construct a data structure over $T$ to be able to answer the following type of online queries efficiently. Given a range $[i,j]$, return a shortest string with exactly one occurrence (starting position) in $[i,j]$. Abedin et al.~presented a data structure of size $\cO(n\log n)$ supporting $\cO(\log_w n)$-time queries, where $w=\Theta(\log n)$ is the word size. Additionally, Abedin et al.~\cite{DBLP:journals/algorithms/Abedin0PT20} presented a data structure of size $\cO(n)$ supporting $\cO(\sqrt{n}\log^\epsilon n)$-time queries, where $\epsilon$ is an arbitrarily small positive constant.

\paragraph*{Our Techniques}

For clarity of exposition, in this overview, we skip the time-efficient construction algorithms of our data structures and only describe how to compute the \emph{length} of a shortest absent word (without a witness) in $T[a\dd b]$; note that this length is at most $\log_\sigma n$. Let us also recall that the length of a shortest absent word of $T$ can be computed in $\cO(n)$ time using the suffix tree of~$T$~\cite{DBLP:conf/focs/Farach97}. It suffices to traverse the suffix tree of $T$ recording the shortest string-depth $\ell$, where an implicit or explicit node has less than $\sigma$ outgoing edges.

\emph{First approach:} We precompute, for each position $i$ and for each length $j \in [1, \log_\sigma n]$, the starting position of the shortest suffix of $T[1 \dd i]$ that contains an occurrence of each of the $\sigma^j$ distinct words of length $j$. Then, a query for the length of a shortest absent word of $T[a \dd b]$ reduces to computing the predecessor of $a$ among the starting positions we have precomputed for position $b$. By maintaining these $\cO(\log_\sigma n)$ starting positions in a fusion tree~\cite{DBLP:journals/jcss/FredmanW93}, we obtain a data structure of size $\cO(n \log_\sigma n)$ supporting queries in $\cO(\log_w \log n) = \cO(1)$ time.

\emph{Second approach:} We precompute, for each length $j \in [1, \log_\sigma n]$, all minimal fragments of $T$ that contain an occurrence of each of the distinct $\sigma^j$ words of length $j$.
As these fragments are inclusion-free, we can encode them using two $n$-bit arrays storing their starting and ending positions in $T$, respectively. We thus require $\cO(n)$ words of space in total over all~$j$s. Observe that $T[a \dd b]$ does not have an absent word of length $j$ if and only if it contains a minimal fragment for length $j$; we can check this condition in $\cO(1)$ time after augmenting the computed bit arrays with succinct rank and select data structures~\cite{DBLP:conf/focs/Jacobson89}. Finally, due to monotonicity (if $T[a\dd b]$ contains all strings of length $j+1$ then $T$ contains all strings of length $j$), we can binary search for the answer in $\cO(\log \log_\sigma n)$ time.

\emph{Third approach:}
We optimize the first approach by utilizing succinct fusion trees to store the sets of size $\cO(\log_\sigma n)$ associated with positions of $T$, thus reducing the space on top of the sets to $\cO(n\log_\sigma\log n)$. Instead of storing the $\cO(\log_\sigma n)$-size sets explicitly, we compute their elements on demand using $\cO(\log_\sigma n)$ select data structures, each occupying $\cO(n)$ bits. This leads to an $\cO(n\log_\sigma \log n)$-space solution.
In order to optimize it further, we rely on the following combinatorial observation: if the length of a shortest absent word of a string $X$ over $\Sigma$ is $\lambda$, we need to append $\Omega(\sigma^{d-1} \cdot \lambda)$ letters to $X$ in order to obtain a string with a shortest absent word of length $\lambda+d$.
(For intuition, think of $|X|$ as a constant; then, we essentially need to append the de Bruijn sequence of order $d$ over $\Sigma$ to $X$ in order to achieve the desired result.)
This observation allows us to lower the memory consumption by truncating all succinct fusion trees at positions that are not multiples of $\log\log n$, by building them only for their first $\cO(\log n / \log\log n)$ entries. The total space thus reduces to $\cO(n)$ words. A query for the length of a shortest absent word of $T[a \dd b]$ is performed by first checking whether the answer is at most $\log n /\log\log n$, which is done using the (truncated) fusion tree stored at $b$, and, if not, a query on $T[a \dd b']$ is performed, where $b'$ is the closest multiple of $\log\log n$ after $b$. It can be shown using the combinatorial observation that the answer for $T[a \dd b]$ is within an $\cO(1)$-length range of the answer for $T[a \dd b']$, and it is computed by the data structure from the second approach.

%

\paragraph*{Other Related Work}

Let us recall that a string $S$ that does not occur in $T$ is called \emph{absent} from $T$, and if all its proper substrings appear in $T$ it is called a \emph{minimal absent word} of $T$. It should be clear that every shortest absent word is also a minimal absent word. Minimal absent words (MAWs) are used in many applications~\cite{Silva02042015,10.1093/bioinformatics/btaa686,DBLP:journals/tcs/FiciMRS06,minabpro,Chairungsee2012109,DBLP:conf/isita/OtaM10,DCA} and their theory is well developed~\cite{DBLP:journals/tcs/MignosiRS02,DBLP:conf/spire/FiciG19,FiReRi19}, also from an algorithmic and data structure point of view~\cite{DBLP:conf/sofsem/MienoKAFNIBT20,DBLP:journals/iandc/CrochemoreHKMPR20,MAW,DBLP:conf/spire/Charalampopoulos18,IC18,dcc19,DBLP:conf/mfcs/FujishigeTIBT16,DBLP:conf/ppam/BartonHMP15,Crochemore98automataand}. For example, it is well known that, given two strings $X$ and $Y$, one has $X=Y$ if and only if $X$ and $Y$ have the same set of MAWs~\cite{DBLP:journals/tcs/MignosiRS02}.

\paragraph*{Paper Organization}
Section~\ref{sec:prel} provides some preliminaries.
The first approach is detailed in Section~\ref{sec:constant} and the second one in Section~\ref{sec:loglog}.
Section~\ref{sec:comb} provides the combinatorial foundations for the third approach, which is detailed in Section~\ref{sec:main}. Sections~\ref{sec:constant}--\ref{sec:comb} have essentially already appeared in the conference version~\cite{self:CPM} of our paper; the main difference and novelty lie in Section~\ref{sec:main}. We conclude with  open problems in Section~\ref{sec:fin}.


\section{Preliminaries}\label{sec:prel}

An \emph{alphabet} $\Sigma$ is a finite nonempty set whose elements are called \emph{letters}.
A {\em string} (or \emph{word}) $S = S[1\dd n]$ is a sequence of \emph{length} $|S|=n$ over~$\Sigma$.
The {\em empty} string $\varepsilon$ is the string of length $0$.
The \emph{concatenation} of two strings $S$ and $T$ is the
string composed of the letters of $S$ followed by the letters of $T$; it is denoted by $S\cdot T$ or simply by $ST$.
The set of all strings (including $\varepsilon$) over $\Sigma$ is denoted by $\Sigma^{*}$.
The set of all strings of length $k>0$ over $\Sigma$ is denoted by $\Sigma^{k}$. For $1 \leq i \le j \leq n$, $S[i]$ denotes the $i$th letter of $S$, and the fragment $S[i\dd j]$ denotes an \emph{occurrence} of the underlying \emph{substring} $P=S[i]\cdots S[j]$. We say that $P$ {\em occurs} at (starting) \emph{position} $i$ in $S$. A string $P$ is called \emph{absent} from $S$ if it does not occur in $S$.
A substring $S[i \dd j]$ is a \emph{suffix} of $S$ if $j=n$ and it is a \emph{prefix} of $S$ if $i=1$.

The following proposition is straightforward (as explained in Section~\ref{sec:intro}).

\begin{proposition}
Let $T$ be a string of length $n$ over an alphabet $\Sigma\subset \{1,2,\ldots,n^{\cO(1)}\}$.
A shortest absent word of $T$ can be computed in $\cO(n)$ time.
\label{th:saw}
\end{proposition}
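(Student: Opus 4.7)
The plan is to build the suffix tree of $T$ (augmented with a fresh sentinel letter not in $\Sigma$) and exploit the following observation: a shortest absent word $W = W[1\dd\ell]$ of $T$ satisfies that $W[1\dd\ell-1]$ occurs in $T$---otherwise $W[1\dd\ell-1]$ itself would be a strictly shorter absent word, contradicting minimality. Hence $\ell-1$ equals the minimum string-depth of a location in the suffix trie (either at an explicit node or at an implicit position on some edge) whose set of in-tree one-letter extensions misses at least one letter of $\Sigma$.

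Concretely, I would first test whether some letter of $\Sigma$ is absent from $T$, in which case the answer has length $1$; this test runs in $\cO(n)$ time (if $\sigma>n$ it holds trivially, and otherwise the distinct letters of $T$ can be collected via radix sort and compared with $\Sigma$). Assuming every letter of $\Sigma$ appears in $T$, I construct in $\cO(n)$ time the suffix tree of $T$ extended with the sentinel, using a linear-time construction for integer alphabets of polynomial size, e.g., Farach's algorithm~\cite{DBLP:conf/focs/Farach97}. Then I traverse the tree once and collect candidates as follows: for each explicit internal node $u$ at string-depth $d$, I record the candidate length $d+1$ if $u$ has fewer than $\sigma$ children whose edge labels begin with a letter of $\Sigma$ (i.e., excluding the outgoing sentinel edge if present); for each edge $(u,v)$ of length at least $2$ whose label does not begin with the sentinel, the point one symbol into the edge is an implicit node admitting a single in-tree extension, so I record the candidate $\mathrm{sdepth}(u)+2$, together with any letter of $\Sigma$ other than the next symbol on the edge.

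The answer is the minimum over all recorded candidates, and a witness is reconstructed as the fragment of $T$ spelling the path from the root to the chosen position---representable in constant space by a starting position and a length---concatenated with the recorded missing letter. The only bookkeeping that needs attention is the comparison ``fewer than $\sigma$ children'' at explicit nodes, but since the suffix tree has $\cO(n)$ edges in total and $\sigma\le n$ in the nontrivial case, the total work across all nodes is $\cO(n)$. I do not foresee a substantive obstacle once Farach's linear-time suffix tree construction is invoked; the entire argument hinges on the observation above about the length-$(\ell-1)$ prefix of a shortest absent word.
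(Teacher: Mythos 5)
Your proposal is correct and matches the paper's own (one-sentence) argument: the paper likewise builds the suffix tree in linear time via Farach's algorithm and records the shortest string-depth at which an implicit or explicit node has fewer than $\sigma$ outgoing edges. You merely spell out the details (sentinel handling, the $\mathrm{sdepth}(u)+2$ candidates on long edges, witness recovery) that the paper leaves implicit.
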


Given an array $A$ of $n$ items taken from a totally ordered set, the \emph{range minimum query} $\textsf{RMQ}_A(\ell,r) =\argmin A[k]$ (with $1 \leq \ell \leq k \leq r \leq n$) returns the position of the minimal element in $A[\ell \dd r]$. The following result is known.

\begin{theorem}[\cite{DBLP:conf/latin/BenderF00,DBLP:journals/siamcomp/FischerH11}]
Let $A$ be an array of $n$ integers.
A data structure of size $2n+o(n)$ bits that supports $\textsf{RMQ}$s on $A$ in $\cO(1)$ time without the need to store and access $A$ itself can be constructed in $\cO(n)$ time.%
\label{th:rmq}%
\end{theorem}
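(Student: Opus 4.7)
The plan is to exploit the classical reduction from RMQ to lowest common ancestor (LCA) queries on the \emph{Cartesian tree} of $A$ (Gabow--Bentley--Tarjan). Recall that the Cartesian tree $C(A)$ is the binary tree whose root is the position of the minimum of $A$ and whose left (resp.~right) subtree is the Cartesian tree of the prefix (resp.~suffix) of $A$ to the left (resp.~right) of that minimum; then $\textsf{RMQ}_A(\ell,r)$ corresponds exactly to the LCA in $C(A)$ of the nodes associated with positions $\ell$ and $r$. Crucially, $C(A)$ completely determines every RMQ answer \emph{as a position}, so once we can answer LCA queries on $C(A)$, the array $A$ itself is no longer needed.

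First I would build $C(A)$ in $\cO(n)$ time via the standard right-spine sweep. Then I would store $C(A)$ using a succinct $2n$-bit balanced-parenthesis (BP) encoding of its depth-first traversal; since $C(A)$ is built with the convention that in-order traversal corresponds to the order of positions in $A$, the correspondence between array indices and tree nodes is implicit in the encoding and can be realized by $\cO(1)$-time conversions with only $o(n)$ extra bits (using $o(n)$-bit rank/select structures on the BP bitstring of Jacobson and later refinements). Next I would augment the BP sequence with the standard $o(n)$-bit toolkit for succinct trees: constant-time $\pm 1$-RMQ on the excess array, together with the operations needed for Sadakane's LCA algorithm on BP-encoded trees. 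Combining these yields $\cO(1)$-time LCA queries on $C(A)$, and hence $\cO(1)$-time RMQ queries on $A$, using $2n + o(n)$ bits in total.

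For the construction, $C(A)$ is built in $\cO(n)$ time and written directly as the BP sequence by emitting ``$($'' on every push and ``$)$'' on every pop during the right-spine sweep. All the $o(n)$-bit auxiliary data structures (rank/select on bitstrings, $\pm 1$-RMQ on the excess array, and the small precomputed tables needed by the Four Russians trick inside the succinct-tree machinery) are themselves constructible in $\cO(n)$ time. This gives the stated $\cO(n)$ overall construction time.

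The main obstacle is to ensure $\cO(1)$-time LCA on a $2n+o(n)$-bit representation; I would handle it by appealing to the standard succinct-tree framework (Jacobson's rank/select, Sadakane--Navarro for BP navigation, the Four-Russians/microblock scheme for $\pm 1$-RMQ). A secondary technical point is maintaining the index$\,\leftrightarrow\,$node correspondence purely through the BP bitstring: this is done by fixing the in-order convention so that the $i$-th opening parenthesis (in in-order) is the node for position $i$, and translating indices via $\cO(1)$-time rank/select on the BP encoding. No access to $A$ is required at query time, which matches the claim of the theorem.
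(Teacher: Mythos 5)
The paper does not actually prove this statement; it imports it from the cited works of Bender--Farach-Colton and Fischer--Heun, so your proposal has to be measured against the standard proofs in those references. Your overall plan (Cartesian tree, constant-time LCA, succinct tree encoding) is the right family of ideas, but there is a genuine gap at the central step. The Cartesian tree is a \emph{binary} tree in which the left/right distinction carries essential information: whether a node's single child is a left or a right child determines which array positions lie below it, and hence determines the in-order correspondence between tree nodes and array indices. The $2n$-bit balanced-parenthesis encoding of the DFS traversal that you propose encodes only the underlying \emph{ordinal} tree and discards exactly this distinction. Concretely, $A=[1,2]$ and $A=[2,1]$ both yield the ordinal tree ``root with one child'' and the identical BP string \texttt{(())}, yet array position $1$ is the root in the first case and the child in the second; no $o(n)$-bit rank/select machinery on that string can recover which is which. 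So the assertion that ``the correspondence between array indices and tree nodes is implicit in the encoding'' fails, and the query algorithm breaks before LCA is even invoked.

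This is precisely the obstacle the cited literature addresses: Sadakane's earlier succinct RMQ structure pays $4n+o(n)$ bits to retain the index--node correspondence, and the contribution of Fischer and Heun is to reach $2n+o(n)$ bits by replacing the Cartesian tree with the 2d-Min-Heap stored in DFUDS, where a query is answered by a $\pm 1$-RMQ on the excess sequence plus a constant-size case analysis rather than by an LCA on the Cartesian tree. Interestingly, the string you describe emitting during construction --- ``('' on every push and ``)'' on every pop of the right-spine sweep --- is \emph{not} the DFS BP of the Cartesian tree (for $A=[2,1,3]$ it gives \texttt{()(())} whereas the DFS BP is \texttt{(()())}); it is essentially the Fischer--Heun encoding, which \emph{does} determine all RMQ answers. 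In other words, your construction writes down the right bitstring, but your query algorithm is attached to the wrong interpretation of it. To close the gap you would need either to adopt the $\pm 1$-RMQ-on-excess query procedure of Fischer--Heun on that string, or to use a genuinely binary-tree-aware $2n$-bit representation of the Cartesian tree that supports the required navigation and index conversion.
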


We make use of {\em rank and select} data structures constructed over bit vectors. For a bit vector $H$ we define $\textsf{rank}_q(i,H)=|\{k\in[1,i]:H[k]=q\}|$ and $\textsf{select}_q(i,H)=\min\{k\in[1,n]:\textsf{rank}_q(k,H)=i\}$, for $q\in\{0,1\}$. The following result is known.

\begin{theorem}[\cite{DBLP:conf/focs/Jacobson89,DBLP:books/daglib/0038982}]
Let $H$ be a bit vector of $n$ bits.
A data structure of $o(n)$ additional bits that supports $\textsf{rank}$ and $\textsf{select}$ queries on $H$ in $\cO(1)$ time can be constructed in $\cO(n)$ time.
\label{th:rs}
\end{theorem}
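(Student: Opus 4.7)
The plan is to build constant-time data structures for $\rank_q$ and $\select_q$ separately. Since $\rank_0(i,H)=i-\rank_1(i,H)$ and the $\select_0$ structure is obtained by applying the $\select_1$ construction to the bitwise complement of $H$ (without materializing it), it suffices to describe the $1$-variants. In both cases I will follow the Jacobson--Clark--Munro template: a hierarchy of precomputed counters living in arrays of $o(n)$ bits, completed by a universal lookup table whose size depends only on $n$.

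For $\rank_1$ I would use a two-level block hierarchy. Partition $H$ into superblocks of $s=\lfloor\log^2 n\rfloor$ bits and each superblock into mini-blocks of $b=\lfloor\tfrac12\log n\rfloor$ bits. An array $R_1$ indexed by superblocks stores the cumulative number of $1$s in $H$ up to each superblock boundary; with $\cO(n/\log^2 n)$ entries of $\cO(\log n)$ bits, $|R_1|=\cO(n/\log n)$. An array $R_2$ indexed by mini-blocks stores, at each mini-block boundary, the number of $1$s within the enclosing superblock so far; each entry fits in $\cO(\log\log n)$ bits, so $|R_2|=\cO(n\log\log n/\log n)=o(n)$. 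A universal table $U$ keyed by (mini-block content, offset $k\in[0,b]$) stores the popcount of the first $k$ bits; its size is $2^b(b+1)\cdot\cO(\log\log n)=\cO(\sqrt n\,\log n\log\log n)=o(n)$ bits. A query $\rank_1(i,H)$ returns $R_1[\lfloor i/s\rfloor]+R_2[\lfloor i/b\rfloor]+U[\,\text{content of mini-block containing }i,\,i\bmod b\,]$ in $\cO(1)$ time.

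For $\select_1$ the obstacle is that $1$s are irregularly spaced, so a block-aligned counter scheme does not yield constant-time queries. I would follow the Clark--Munro sparse/dense decomposition. Group the $1$s of $H$ into \emph{superblocks} of $t=\lceil\log n\log\log n\rceil$ consecutive $1$s, and, for each superblock, store the position in $H$ of its first $1$, costing $\cO(n/t)\cdot\log n=o(n)$ bits in total. Declare a superblock \emph{sparse} if its $1$s span more than $t^2$ bits of $H$; for such superblocks I can afford to list the position of every $1$ explicitly using $\cO(t\log n)$ bits per superblock, and since each spans $\Omega(t^2)$ bits their total explicit storage is $\cO(n\log n/t)=o(n)$. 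A \emph{dense} superblock fits in a window of $\cO(t^2)$ bits; inside it I recurse one level deeper, grouping $1$s into sub-superblocks of $\sqrt t$ $1$s with the same dichotomy, after which the residual problem is $\select$ inside a window of polylogarithmic width, answered by a second universal lookup table of $o(n)$ bits built in analogy to $U$. Construction is a single left-to-right scan of $H$ that fills all counters and classifies superblocks, plus a one-off table build of $2^{\cO(b)}\cdot\mathrm{poly}\log n = o(n)$ cost, for $\cO(n)$ time overall.

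The main obstacle is engineering the select hierarchy so that all three space-consuming ingredients---boundary pointers, explicit $1$-position lists for sparse superblocks, and recursive substructures plus lookup tables for dense ones---simultaneously fit in $o(n)$ bits while still allowing a constant-time dispatch between the sparse and dense cases. The $\rank_1$ side is comparatively routine because $1$-positions are never stored. Once the select parameter $t$ and the recursion depth are fixed, constant query time and the $\cO(n)$ construction bound follow from bookkeeping on the arithmetic of $\log n$ and $\log\log n$, which I would expand in the formal proof.
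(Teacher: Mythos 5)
The paper does not actually prove this statement---it is imported wholesale from Jacobson and Clark--Munro as a black box---so there is no in-paper argument to compare against; what can be judged is whether your sketch of the standard construction is sound. Your $\rank$ half is the canonical two-level scheme and is correct as stated: the three space bounds ($\cO(n/\log n)$, $\cO(n\log\log n/\log n)$, and $\cO(\sqrt{n}\,\log n\log\log n)$ bits) all check out, the query is a sum of three $\cO(1)$-time lookups, and a single scan plus an $o(n)$-time table build gives $\cO(n)$ construction. The reductions $\rank_0(i,H)=i-\rank_1(i,H)$ and $\select_0$ via the complement are also fine.

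The gap is in the $\select$ hierarchy, at exactly the step you defer. With first-level groups of $t=\lceil\log n\log\log n\rceil$ ones and second-level groups of $\sqrt{t}$ ones under ``the same dichotomy,'' a \emph{dense} second-level group still spans up to $(\sqrt{t})^2=t=\Theta(\log n\log\log n)$ bits of $H$. That residual window is $\omega(\log n)$ bits wide: it does not fit in $\cO(1)$ machine words, and a universal table keyed by its content would have $2^{\Theta(\log n\log\log n)}=n^{\Theta(\log\log n)}$ entries, which is superpolynomial rather than $o(n)$. Decomposing the window into $(\tfrac12\log n)$-bit chunks ``in analogy to $U$'' does not rescue constant time either, since locating the chunk containing the sought $1$ would require inspecting up to $\Theta(\log\log n)$ of them. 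The standard repair (Clark's construction) takes second-level groups of $(\log\log n)^2$ ones with a sparseness threshold of roughly $(\log\log n)^4$ bits, so that every dense residual window has width $(\log\log n)^{\cO(1)}=o(\log n)$ and hence fits inside a fraction of one machine word, where a single $o(n)$-bit table lookup finishes the query in $\cO(1)$ time. With that reparameterization your space and time accounting goes through; with the parameters as written, the final lookup-table step fails.
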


The \emph{static predecessor} problem consists in preprocessing a set $Y$ of integers, over an ordered universe~$U$, so that, for any integer $x \in U$ one can efficiently return the predecessor $\pred(x):=\max\{y \in Y : y \leq x\}$ of $x$ in $Y$.
The successor problem is defined analogously: upon a queried integer $x \in U$, the successor $\min\{y \in Y : y \geq x\}$ of $x$ in $Y$ is to be returned.
Willard and Fredman designed the \emph{fusion tree} data structure for this problem~\cite{DBLP:journals/jcss/FredmanW93}.
In the dynamic variant of the problem, updates to $Y$ are interleaved with predecessor and successor queries.
P{\u{a}}tra{\c{s}}cu and Thorup~\cite{DBLP:conf/focs/PatrascuT14} presented a dynamic version of fusion trees, which, in particular, yields an efficient construction of this data structure.

\begin{theorem}[\cite{DBLP:journals/jcss/FredmanW93,DBLP:conf/focs/PatrascuT14}]\label{thm:fusion}
Let $Y$ be a set of at most $n$ $w$-bit integers.
A data structure of size $\cO(n)$ can be constructed in $\cO(n\log_w n)$ time supporting insertions, deletions, and predecessor queries on $Y$ in $\cO(\log_w n)$ time.
\end{theorem}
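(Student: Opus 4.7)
The plan is to follow the third approach sketched in the overview: combine the per-position fusion-tree idea of the first approach with the per-length minimal-fragment idea of the second approach, and use a combinatorial ``blow-up'' lemma to force enough sparsity that the total space drops from $\cO(n\log_\sigma n)$ to $\cO(n)$. For each position $i\in[1,n]$ and each length $j\in[1,\log_\sigma n]$, let $s_j(i)$ denote the largest starting position such that the suffix $T[s_j(i)\dd i]$ contains every string of $\Sigma^j$ (and $s_j(i)=+\infty$ if no such suffix exists). The length of a shortest absent word of $T[a\dd b]$ is then the smallest $j$ with $s_j(b)<a$. Equivalently, encoding the values $s_1(b)>s_2(b)>\cdots$ as a sorted set, the answer is determined by a predecessor query for $a$ in this set, plus $\cO(1)$ table lookups to recover a witness letter from the fragment. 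This is the first approach, giving $\cO(n\log_\sigma n)$ space and $\cO(1)$ query via \Cref{thm:fusion}.

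To save space, I would store the sets $\{s_j(b)\}_j$ \emph{succinctly}, using the succinct fusion trees of Grossi et al.\ together with \Cref{th:rs}: the $\cO(\log_\sigma n)$ keys at each $b$ can be reconstructed on demand by maintaining, for each $j$, a bit vector marking positions $b$ for which $s_j(b)$ takes a new value, equipped with the rank/select structure of \Cref{th:rs}. This already cuts the overhead on top of the sets to $\cO(n\log_\sigma\log n)$. The critical savings come from the combinatorial observation proved in Section~\ref{sec:comb}: if the shortest absent word of some string $X$ has length $\lambda$, then $X$ must be extended by $\Omega(\sigma^{d-1}\cdot \lambda)$ letters before the shortest-absent-word length can grow to $\lambda+d$. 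Consequently, among the $\log_\sigma n$ values $s_1(b),\ldots,s_{\log_\sigma n}(b)$, only $\cO(\log n/\log\log n)$ of them are ``active'' in any window of length $\cO(\log\log n)$. I would therefore \emph{truncate} each succinct fusion tree to its first $\cO(\log n/\log\log n)$ entries; summed over all $b$, the stored information is $\cO(n)$ words.

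For the query algorithm on $[a,b]$, I would first consult the truncated fusion tree at $b$. If the predecessor of $a$ lies within the stored prefix, it directly yields the answer in $\cO(1)$ time; this handles all queries whose answer is at most $\log n/\log\log n$. Otherwise, I would set $b'$ to the next multiple of $\log\log n$ after $b$ and query the length-based data structure of the second approach on $[a,b']$, which stores, for each length $j$, the inclusion-free family of minimal fragments of $T$ covering all of $\Sigma^j$, encoded via two bit vectors per $j$ with \Cref{th:rs}; this answers ``does $T[a\dd b']$ contain every string of $\Sigma^j$?'' in $\cO(1)$ time, so binary search over $j$ solves it in $\cO(\log\log_\sigma n)$ time. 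By the combinatorial lemma of Section~\ref{sec:comb} applied with $d=\cO(1)$, the answer for $[a,b]$ lies in an $\cO(1)$-length window around the answer for $[a,b']$, so I would probe that constant-size window directly (no binary search), answering in $\cO(1)$ time. A witness for the shortest absent word is then any letter that fails to extend an occurrence of the identified $(\lambda-1)$-length substring inside $T[a\dd b]$, found by a single call to an internal longest-common-extension/IPM-style lookup already in the data structure.

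The main obstacle, and where most of the technical effort goes, is the construction: building, in $\cO(n\log_\sigma n)$ time, (i) the truncated succinct fusion trees for all $n$ positions with the marker bit vectors and associated rank/select structures, and (ii) the per-length minimal-fragment bit vectors. For (ii), I would sweep $j$ from $1$ to $\log_\sigma n$ and, for each $j$, compute minimal fragments covering $\Sigma^j$ via a two-pointer scan over $T$ using a suffix-tree-based ``$\sigma^j$-complete'' test; amortization over $j$ through the combinatorial bound keeps the cost $\cO(n\log_\sigma n)$. For (i), the marker bits change only when $s_j(b)$ changes with $b$, and the combinatorial bound caps the total number of changes so that the construction time of \Cref{thm:fusion} on $w$-bit keys fits in $\cO(n\log_\sigma n)$. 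Assembling these pieces, together with the $\cO(1)$ witness extraction, yields the data structure claimed in \Cref{the:main}.
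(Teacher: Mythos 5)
Your proposal does not prove the statement it was asked to prove. The statement is Theorem~\ref{thm:fusion}: the existence of a dynamic fusion tree, i.e., a data structure of size $\cO(n)$ over a set $Y$ of at most $n$ $w$-bit integers, constructible in $\cO(n\log_w n)$ time and supporting insertions, deletions, and predecessor queries in $\cO(\log_w n)$ time. This is an external result that the paper simply cites from Fredman--Willard and P{\u{a}}tra{\c{s}}cu--Thorup and does not prove; a proof of it would have to engage with the actual machinery of fusion trees (packing $\Theta(w^{1/5})$ or so keys into a single node, compressing keys via a sketch of their distinguishing bit positions, performing parallel comparisons within a word by multiplication tricks, and---for the dynamic version---the buffering/rebuilding scheme of P{\u{a}}tra{\c{s}}cu and Thorup that achieves $\cO(\log_w n)$ update time). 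None of that appears in your write-up.

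What you have written instead is a sketch of the proof of the paper's main result, Theorem~\ref{the:main} (the internal shortest absent word data structure), which \emph{uses} Theorem~\ref{thm:fusion} as a black box. As a sketch of that other theorem it is largely faithful to the paper's Sections~\ref{sec:constant}--\ref{sec:main} (the per-position predecessor sets, the minimal order-$j$ fragments with rank/select, the truncation of succinct fusion trees at positions that are not multiples of $\log\log n$, and the $\cO(1)$-window correction via Lemma~\ref{lem:rev}), though even there you invoke Theorem~\ref{thm:fusion} rather than establish it, so the argument is circular with respect to the assigned statement. To answer the actual question you would need to either reproduce the fusion-tree construction from the cited works or explicitly note that the statement is quoted from the literature and is not proved in this paper.
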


We also use a succinct version of the (static) fusion tree that utilizes only $\cO(n\log w)$ \emph{bits} on top of a read-only array $Y$ of length $n$ (in contrast, the fusion tree from Theorem~\ref{thm:fusion} uses $\cO(nw)$ \emph{bits}). In this data structure there is no need to store the array $Y$ explicitly. Instead, $Y$ can be ``emulated'' by computing its elements on demand in $\cO(1)$ time.
Albeit it is not explicitly stated in~\cite{grossi_et_al:LIPIcs:2009:1847,DBLP:conf/compgeom/ChanLP11}, it follows from their construction that the succinct version can be constructed from a (usual) fusion tree in linear time.


\begin{theorem}[\cite{grossi_et_al:LIPIcs:2009:1847,DBLP:conf/compgeom/ChanLP11}]\label{thm:fusion-succinct}
Let $Y$ be a read-only array of at most $n$ $w$-bit integers and $n \le w^{\cO(1)}$.
A data structure of size $\cO(n\log w)$ bits can be constructed in $\cO(n\log_w n)$ time supporting predecessor queries on the elements of $Y$ in $\cO(\log_w n)$ time, provided that a table computable in $o(2^w)$ time and independent of the array has been precomputed.
\end{theorem}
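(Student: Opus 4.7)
The plan is to build a standard fusion tree of branching factor $B = \Theta(w^{\varepsilon})$ on $Y$ via Theorem~\ref{thm:fusion}, and then compress each of its nodes into a succinct representation that avoids storing the keys altogether. Recall the structure of a fusion-tree node: it holds $B$ sorted keys; a predecessor walk through the node extracts a \emph{sketch} of each key by projecting onto the $r=\cO(B)$ \emph{distinguishing bit positions} (the positions where at least two node-keys differ), finds the rank of the query's sketch among the sketches by word-level parallel comparison, and corrects that rank by one longest-common-prefix computation against an actual key of the candidate child.

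For the succinct version we keep, for every node, only: the list of $r$ distinguishing bit positions (each taking $\cO(\log w)$ bits); the $r$ packed key-sketches together with the $\cO(1)$ constants used for the multiplication-based sketch-extraction trick; and for each of the $B+1$ children one pointer into $Y$ to an arbitrary representative key of that subtree. Because $n \le w^{\cO(1)}$, such a pointer fits in $\cO(\log w)$ bits, so each node takes $\cO(B\log w)$ bits; summing over the $\cO(n/B)$ nodes gives $\cO(n\log w)$ bits overall. No element of $Y$ itself is duplicated in the structure: whenever an algorithm needs a true key, it dereferences the appropriate pointer into the read-only array $Y$ in $\cO(1)$ time.

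A predecessor query walks the tree root-to-leaf and, at each node, (i) sketches $x$ in $\cO(1)$ time using the stored bit positions, the word-level multiplication trick, and the shared precomputed $o(2^w)$-time table; (ii) locates the rank of this sketch among the stored sketches by the standard parallel-compare-and-pop-count routine in $\cO(1)$ time; (iii) fetches through the corresponding $Y$-pointer one representative key of the candidate child and computes its longest common prefix with $x$ by a most-significant-bit operation; (iv) uses this common-prefix length, plus at most one additional pointer dereference, to pick the correct child. Each level costs $\cO(1)$ time, so the total query time is $\cO(\log_w n)$. The construction time matches Theorem~\ref{thm:fusion}: we first build the ordinary fusion tree in $\cO(n\log_w n)$ time and then, in a linear-time post-processing sweep, convert every node into the compressed record described above, discarding the explicit key arrays.

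The main obstacle is performing the sketch-correction step without carrying the keys inside the structure. The key observation making this work is that each visited node needs only a single genuine key of the candidate subtree to disambiguate its descent, and this is exactly what the per-child $Y$-pointer supplies; the rest of the node's bookkeeping is purely positional and hence compresses to $\cO(\log w)$ bits per slot. The other mild technicality is that the universal lookup table used for the bit-parallel sketch and rank operations must be independent of $Y$, which it is in Fredman and Willard's design, so the $o(2^w)$ preprocessing cost is paid once and shared across all instances.
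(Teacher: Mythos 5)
The paper does not actually prove this statement: it is imported wholesale from the two cited works (Grossi et al.\ and Chan, Larsen, and P{\u{a}}tra{\c{s}}cu), with only the remark that a linear-time construction follows from their constructions. Judged on its own merits, your attempt has a genuine gap in the space accounting. To obtain query time $\cO(\log_w n)$ the branching factor must satisfy $\log B = \Omega(\log w)$, i.e.\ $B = w^{\Omega(1)}$. But then the per-node data you keep is far larger than the $\cO(B\log w)$ bits you claim: the packed key-sketches occupy up to $\Theta(B^2)$ bits per node ($\Theta(B)$ keys, each with a sketch of up to $B-1$ distinguishing bits --- and with the multiplication-based \emph{approximate} sketches of Fredman--Willard, $\Theta(B^4)$ bits each), and the ``$\cO(1)$ constants used for the multiplication-based sketch-extraction trick'' are full $w$-bit words, so each node carries $\Theta(w)$ bits of auxiliary data. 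Over the $\Theta(n/B)$ nodes this is $\Theta(nw/B) = n\cdot w^{\Omega(1)}$ bits, exceeding the claimed $\cO(n\log w)$ bits by a polynomial-in-$w$ factor. Replacing stored keys by $\cO(\log w)$-bit pointers into $Y$ only eliminates the $\cO(nw)$-bit key storage; it does nothing about the sketch words and multipliers, which are precisely the part of a Fredman--Willard node that does not compress.

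This is also why your write-up never makes essential use of the precomputed $o(2^w)$-time table, even though the theorem statement singles it out as a necessary proviso. In the actual constructions the in-node sketching machinery is discarded: each node stores only a packed Patricia (blind) trie over the $r = \cO(B)$ distinguishing bit positions together with one $Y$-pointer per leaf, which genuinely is $\cO(B\log w)$ bits, and $B$ is chosen so that this whole description is $o(w)$ bits. A node is then traversed by looking up its packed description (together with the relevant bits of the query) in the shared global table, followed by the String-B-tree-style blind-search correction: fetch the single element of $Y$ at the reached leaf, compute its longest common prefix with the query, and re-descend accordingly. Your observation that one genuine key per node suffices to disambiguate the descent is the right one, but unless all per-node word-sized machinery is moved into that shared table, the structure is not succinct and the $\cO(n\log w)$-bit bound fails.
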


Note that if we build multiple predecessor queries for sets of $w$-bit integers using the above theorem, they can all share a unique table computable in $o(2^w)$ time.

If $|U|=\cO(n)$, then, after an $\cO(n)$-time preprocessing, we can answer predecessor queries over the integer universe $U$ in $\cO(1)$ time as follows.
For each $y \in Y$, we set the $y$th bit of an initially all-zeros $|U|$-size bit vector. We then preprocess this bit vector as in~\cref{th:rs}.
Then, a predecessor query for any integer $x$ can be answered in $\cO(1)$ time due to the following readily verifiable formula:
$\pred(x)=\select_1(\rank_1(x))$.

The main problem considered in this paper is formally defined as follows.

\defproblem{\textsf{Internal Shortest Absent Word} (\ISAW)}{A string $T$ of length $n$ over an alphabet $\Sigma\subset \{1,2,\ldots,n^{\cO(1)}\}$ of size $\sigma>1$.}
 {{Given integers $a$ and $b$, with $1\leq a \leq b \leq n$, output a shortest string in $\Sigma^*$ with no occurrence in $T[a \dd b]$}.}

\noindent If $a=b$ then the answer is trivial. So, in what follows we assume that $a<b$. Let us also remark that the output (shortest absent word) can be represented in $\cO(1)$ space using: either a range $[i, j] \subseteq [1,n]$ and a letter $\alpha$ of $\Sigma$, such that the shortest string in $\Sigma^*$ with no occurrence in $T[a \dd b]$ is $T[i \dd j] \alpha$; or simply a range $[i, j] \subseteq [1,n]$ such that the shortest string in $\Sigma^*$ with no occurrence in $T[a \dd b]$ is $T[i \dd j]$.

\begin{example}
Given the string $T = \texttt{abaabaa\textcolor{red}{abbabbb}aaab}$ and the range $[a, b]= [8, 14]$ (shown in red), the only shortest absent word of $T[8 \dd 14]$ is $T[i\dd j]=T[7\dd 8]=\texttt{aa}$.
\end{example}

\section{\boldmath $\cO(n\log_\sigma n)$ Space and $\cO(1)$ Query Time}\label{sec:constant}

Let $T$ be a string of length $n$.
We define $S_T(j)$ as the function counting the cardinality of the set of length-$j$ substrings of $T$. This is known as the \emph{substring complexity} function~\cite{DBLP:journals/dm/Ferenczi99,DBLP:journals/algorithmica/RaskhodnikovaRRS13}. Note that $S_T(j)\leq n$, for all $j$. We have the following simple fact.

\begin{fact}\label{fct:saw}
The length $\ell$ of a shortest absent word of a string $T$ of length $n$ over an alphabet of size $\sigma$ is equal to the smallest $j$ for which $S_T(j)<\sigma^j$ and hence $\ell\in [1,\lfloor \log_\sigma n \rfloor]$.
\end{fact}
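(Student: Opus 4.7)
My plan is to split the fact into its two elementary halves and prove each by direct counting.

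\textbf{Characterization.} For the equality $\ell = \min\{j \ge 1 : S_T(j) < \sigma^j\}$, I would argue that a length-$j$ word $w \in \Sigma^j$ is absent from $T$ exactly when $w$ is not one of the distinct length-$j$ substrings of $T$, of which there are $S_T(j)$ by the definition of the substring complexity function. Since $|\Sigma^j| = \sigma^j$, at least one $w \in \Sigma^j$ is absent if and only if $S_T(j) < \sigma^j$. The length of the shortest absent word is, by definition, the smallest $j$ for which an absent length-$j$ word exists, which gives the claimed equality directly.

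\textbf{Range bound.} For the magnitude bound on $\ell$, I would use the trivial counting estimate $S_T(j) \le n - j + 1 \le n$, valid because $T$ admits at most $n - j + 1$ starting positions for length-$j$ substrings. Consequently, any $j$ with $\sigma^j > n$ automatically satisfies $S_T(j) < \sigma^j$, so $\ell$ does not exceed the smallest such positive integer. Combined with the sharper form $S_T(j) \le n - j + 1$ and the integer-arithmetic translation of the inequality $\sigma^j > n - j + 1$ into a bound in terms of $\lfloor \log_\sigma n \rfloor$, this yields the stated range $\ell \in [1, \lfloor \log_\sigma n \rfloor]$; the lower bound is immediate since $\ell$ is a positive integer.

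Neither step presents a real obstacle. The only care required is the final integer-arithmetic conversion between the real inequality $\sigma^j > n - j + 1$ and a bound expressed in terms of $\lfloor \log_\sigma n \rfloor$; the existence half is a one-line counting argument once the definition of $S_T$ has been unpacked.
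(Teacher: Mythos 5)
Your first half is exactly the intended argument (the paper states Fact~\ref{fct:saw} without proof, introducing it as a ``simple fact''): a word of $\Sigma^j$ is absent from $T$ if and only if it is not among the $S_T(j)$ distinct length-$j$ substrings, so some length-$j$ word is absent if and only if $S_T(j)<\sigma^j$, and taking the smallest such $j$ gives the characterization of $\ell$. That part is complete and correct.

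The gap is in the range bound, precisely at the step you wave at as ``the integer-arithmetic translation.'' The crude estimate $S_T(j)\le n$ only yields $\ell\le\lfloor\log_\sigma n\rfloor+1$, because the smallest $j$ with $\sigma^j>n$ is $\lfloor\log_\sigma n\rfloor+1$. To remove the $+1$ you would need $\sigma^j>n-j+1$ at $j=\lfloor\log_\sigma n\rfloor$, and this can fail: for $\sigma=2$, $n=5$, $j=2$ we have $\sigma^j=4=n-j+1$. This is not mere slack in the argument---the stated bound is itself false in such cases. The string $T=\texttt{00110}$ of length $5$ contains all four binary words of length $2$, so $\ell=3>2=\lfloor\log_2 5\rfloor$; more generally, a linearized de~Bruijn sequence of order $k$ has length $n=\sigma^k+k-1$, contains every word of $\Sigma^k$, and hence has $\ell=k+1>k=\lfloor\log_\sigma n\rfloor$. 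So no counting refinement will deliver $\ell\le\lfloor\log_\sigma n\rfloor$; the provable statement is $\ell\le\lfloor\log_\sigma n\rfloor+1$, equivalently $\ell=\cO(\log_\sigma n)$, which is all the paper actually relies on (the $\cO(n\ell)$ space and time bounds, and $\sigma^j\le n$ for $j<\ell$). You should either prove the weaker $+1$ bound or flag that the fact as written is off by one on de~Bruijn-like inputs.
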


We denote the set of shortest absent words of $T$ by $\text{SAW}_T$.
Recall that, by Proposition~\ref{th:saw}, a shortest absent word of $T$ can be computed in $\cO(n)$ time.
We denote the length of the shortest absent words of $T$ by~$\ell$. By Fact~\ref{fct:saw}, $\ell\leq \lfloor\log_\sigma n \rfloor$. Since $\ell$ is an upper bound on the length of the answer for any \ISAW{} query on $T$, in what follows, we consider only lengths in $[1,\ell-1]$. Let one such length be denoted by $j$. By constructing and traversing the suffix tree of $T$, we can assign to each $T[i \dd i+j-1]$ its lexicographic rank in $\Sigma^j$. The time required for each length $j$ is $\cO(n)$, since the suffix tree of $T$ can be constructed within this time~\cite{DBLP:conf/focs/Farach97}.
Thus, the total time for all lengths $j\in[1,\ell-1]$ is $\cO(n \log_\sigma n)$ by Fact~\ref{fct:saw}.

We design the following warm-up solution to the \ISAW{} problem. For all $j\in[1,\ell-1]$ we store an array $\RNK_j$ of $n$ integers such that $\RNK_j[i]$ is equal to the lexicographic rank of $T[i \dd i+j-1]$ in $\Sigma^j$. Then, given a range $[a,b]$, in order to check if there is an absent word of length $j$ in $T[a \dd b]$ we only need to compute the number of distinct elements in $\RNK_j[a\dd b-j+1]$.
It is folklore that using a persistent segment tree, we can preprocess an array $A$ of $n$ integers in $\cO(n \log n)$ time so that upon a range query $[a,b]$ we can return the number of distinct elements in $A[a\dd b]$ in $\cO(\log n)$ time.
Thus, we could use this tool as a black box for every array $\RNK_j$ resulting, however, in $\Omega(\log n)$-time queries. We improve upon this solution as follows.

We employ a range minimum query (RMQ) data structure~\cite{DBLP:conf/latin/BenderF00} over a slight modification of $\RNK_j$.
For each $j$, we have an auxiliary procedure checking whether all strings from $\Sigma^j$ occur in $T[a \dd b]$ or not (i.e., it suffices to check whether any lexicographic rank is absent from the corresponding range). Similar to the previous solution, we rank the elements of $\Sigma^j$ by their lexicographic order.
We append $\RNK_j$ with all integers in $[1, \sigma^j]$. Let this array be $\APP_j$. By Fact~\ref{fct:saw}, we have that $|\APP_j|\leq 2n$. Then, we construct an array $\PRE_j$ of size $|\APP_j|$: $\PRE_j[i]$ stores the position of the rightmost occurrence of $\APP_j[i]$ in $\APP_j[1\dd i-1]$ (or $0$ if such an occurrence does not exist). This can be done in $\cO(n)$ time per $j$ by sorting the list of pairs $(T[i\dd i+j-1],i)$, for all $i$, using the suffix tree of $T$ to assign ranks for $T[i\dd i+j-1]$ and then radix sort to sort the list of pairs.

We now rely on the following fact.

\begin{fact}\label{fac:trick}
$S_{T[a\dd b]}(j)=\sigma^j$
if and only if $\min\{\PRE_j[i]:i\in[b-j+2,|\PRE_j|]\}\geq a$.
\end{fact}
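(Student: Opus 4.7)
The plan is to prove both implications by tracking, for every rank $r\in[1,\sigma^j]$, a concrete position in $\APP_j$ that witnesses it. The construction appends exactly the list $1,2,\ldots,\sigma^j$ to $\RNK_j$, so every rank appears in $\APP_j$ at least once and appears \emph{exactly once} strictly after position $|\RNK_j|$; this is precisely what lets $\PRE_j$ encode the ``distinct ranks in a range'' information through a single minimum query.

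For the $(\Rightarrow)$ direction I would fix any index $i\in[b-j+2,|\APP_j|]$ and set $r=\APP_j[i]\in[1,\sigma^j]$. The hypothesis that every rank in $[1,\sigma^j]$ occurs in $\RNK_j[a\dd b-j+1]$ gives some $p\in[a,b-j+1]$ with $\RNK_j[p]=r$; since $p\le b-j+1<i$ and $\APP_j[p]=\RNK_j[p]=r$, this $p$ is an occurrence of $r$ in $\APP_j[1\dd i-1]$ with $p\ge a$, so $\PRE_j[i]\ge a$. Taking the minimum over $i$ yields the claimed inequality. This direction is essentially a one-liner.

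The $(\Leftarrow)$ direction is where the appended suffix does the real work. Assuming $\min\{\PRE_j[i]:i\in[b-j+2,|\APP_j|]\}\ge a$, I would fix $r\in[1,\sigma^j]$ and look at its unique occurrence $i_r\in[|\RNK_j|+1,|\APP_j|]\subseteq[b-j+2,|\APP_j|]$ in the appended tail. The hypothesis then gives a position $p\in[a,i_r-1]$ with $\APP_j[p]=r$; uniqueness of $r$ in the tail forces $p\le|\RNK_j|$. If $p\le b-j+1$ we have found an occurrence inside $\RNK_j[a\dd b-j+1]$ and are done. Otherwise $p\in[b-j+2,|\RNK_j|]$, and applying the hypothesis at index $p$ produces a strictly smaller $p'\in[a,p-1]$ with $\APP_j[p']=r$. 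Iterating this strictly decreasing walk, which always stays $\ge a$, must eventually land inside $[a,b-j+1]$, proving that $r$ occurs in $\RNK_j[a\dd b-j+1]$. The only subtle point is verifying that the walk is well-defined at every step, i.e.\ that any intermediate $p$ not yet in the target range lies in $[b-j+2,|\APP_j|]$ so the hypothesis applies; this is automatic from ``$p\ge a$ and $p>b-j+1$''. I do not anticipate any further difficulty.
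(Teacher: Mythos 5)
Your proof is correct and follows essentially the same route as the paper's, which compresses the whole argument into one short paragraph and declares the converse ``analogous.'' If anything, your handling of the $(\Leftarrow)$ direction is the more careful one: the strictly decreasing walk through $\PRE_j$, which stays $\geq a$ until it must land in $[a,b-j+1]$, is precisely the step the paper glosses over with an appeal to the figure.
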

\begin{proof}
If the smallest element in $\PRE_j[b-j+2 \dd |\PRE_j|]$, say $\PRE_j[k]$, is such that $\PRE_j[k]\geq a$, then all ranks of elements in $\Sigma^j$ occur in $\APP_j[a \dd b-j+1]$. This is because all elements (ranks) in $\Sigma^j$
occur at least once after $b-j+2$ (due to appending all integers in $[1, \sigma^j]$ to $\RNK_j$), and thus all must have a representative occurrence after $b-j+2$. Inspect Figure~\ref{fig:fact9} for an illustration. (The opposite direction is analogous.)
\end{proof}

\begin{figure}[h!]
    \centering
    \begin{center}
\includegraphics[width=9cm]{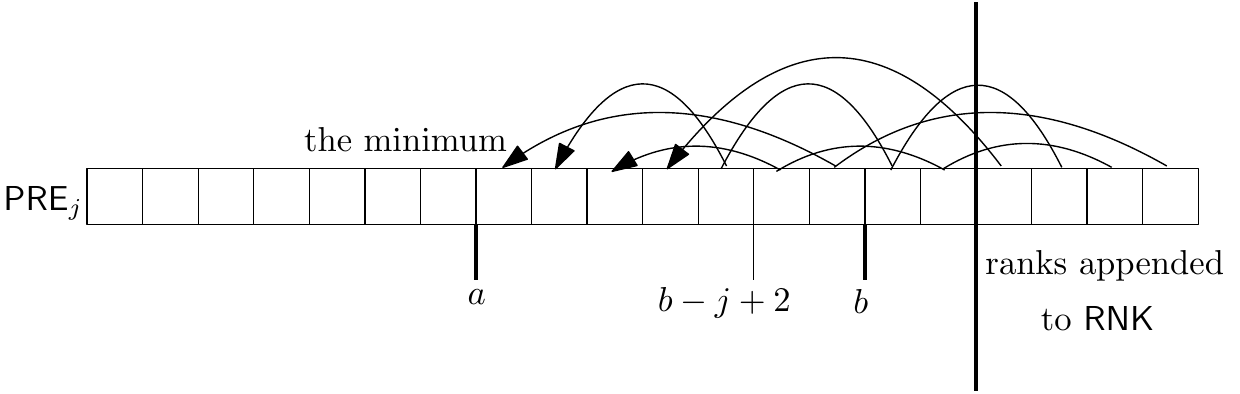}
\end{center}
    \caption{Illustration of the setting in \cref{fac:trick}.}
    \label{fig:fact9}
\end{figure}

The following two examples illustrate the construction of arrays $\RNK_j$, $\APP_j$, and $\PRE_j$ as well as \cref{fac:trick}.

\begin{example}[Construction] Let $T= \tt abaabaaabbabbbaaab$ and $\Sigma=\{\texttt{a},\texttt{b}\}$.
The set $\text{SAW}_T$ of shortest absent words of $T$ over $\Sigma$, each of length $\ell=4$, is $\{\tt aaaa, \tt abab, \tt baba, \tt bbbb\}$. Arrays $\RNK_j$, $\APP_j$, and $\PRE_j$, for all $j\in[1,\ell-1]$, are as depicted in Table~\ref{tbl:ex-con}.
\begin{table}[ht]
\caption{Arrays $\RNK_j$, $\APP_j$, and $\PRE_j$ in Example~\ref{ex:con}.}
\vspace{1mm}
\scalebox{0.89}{
\begin{tabular}{l|l*{25}{p{5pt}}l@{}}
     $i$&\tt1&\tt2&\tt3&\tt4&\tt5&\tt6&\tt7&\tt8&\tt9&\tt10&\tt11 &\tt12&\tt13&\tt14&\tt15&\tt16&\tt17&\tt18&\tt19&\tt20&\tt21&\tt22&\tt23&\tt24 \\
     $T$&\tt a&\tt b&\tt a&\tt a&\tt b&\tt a&\tt a&\tt a&\tt b&\tt b&\tt a & \tt b&\tt b&\tt b &\tt a&\tt a&\tt a&\tt b \\
     \hline
     $\RNK_1$&\tt 1&\tt 2&\tt 1&\tt 1&\tt 2&\tt 1&\tt 1&\tt 1&\tt 2&\tt 2&\tt 1&\tt 2&\tt 2&\tt 2&\tt 1&\tt 1&\tt 1&\tt 2\\
     $\APP_1$&\tt 1&\tt 2&\tt 1&\tt 1&\tt 2&\tt 1&\tt 1&\tt 1&\tt 2&\tt 2&\tt 1&\tt 2&\tt 2&\tt 2&\tt 1&\tt 1&\tt 1&\tt 2&\tt 1&\tt 2 \\
     $\PRE_1$&\tt0&\tt0&\tt 1&\tt 3&\tt 2&\tt4&\tt 6&\tt 7&\tt 5&\tt 9&\tt 8&\tt 10&\tt 12&\tt 13&\tt 11&\tt 15&\tt 16&\tt 14&\tt 17&\tt 18\\\hline
     $\RNK_2$&\tt 2&\tt 3&\tt 1&\tt 2&\tt 3&\tt 1&\tt 1&\tt 2&\tt 4&\tt 3&\tt 2&\tt 4&\tt 4&\tt 3&\tt 1&\tt 1&\tt 2\\
     $\APP_2$&\tt 2&\tt 3&\tt 1&\tt 2&\tt 3&\tt 1&\tt 1&\tt 2&\tt 4&\tt 3&\tt 2&\tt 4&\tt 4&\tt 3&\tt 1&\tt 1&\tt 2&\tt 1&\tt 2&\tt 3&\tt 4\\
     $\PRE_2$&\tt0&\tt 0&\tt0&\tt 1&\tt 2&\tt 3&\tt 6&\tt 4&\tt 0&\tt5&\tt 8&\tt 9&\tt 12&\tt 10&\tt 7&\tt 15&\tt 11&\tt 16&\tt 17&\tt 14&\tt 13\\
     \hline
     $\RNK_3$&\tt3&\tt 5&\tt2&\tt 3&\tt 5&\tt 1&\tt 2&\tt 4&\tt 7&\tt6&\tt 4&\tt 8&\tt 7&\tt 5&\tt 1&\tt 2\\
     $\APP_3$&\tt3&\tt 5&\tt2&\tt 3&\tt 5&\tt 1&\tt 2&\tt 4&\tt 7&\tt6&\tt 4&\tt 8&\tt 7&\tt 5&\tt 1&\tt 2&\tt 1&\tt 2&\tt 3&\tt 4&\tt 5&\tt6&\tt 7&\tt 8\\
     $\PRE_3$&\tt0&\tt 0&\tt0&\tt 1&\tt 2&\tt 0&\tt 3&\tt0&\tt 0&\tt0&\tt 8&\tt 0&\tt 9&\tt 5&\tt 6&\tt7&\tt 15&\tt16&\tt 4&\tt 11&\tt 14&\tt10&\tt 13&\tt 12\\
 \end{tabular}
 }\label{tbl:ex-con}
 \end{table}
 For instance, $\RNK_2[15]=\APP_2[15]=1$ denotes that the lexicographic rank of $\texttt{aa}$ in $\Sigma^2$ is~$1$; and $\PRE_2[15]=7$ denotes that the previous rightmost occurrence of $\texttt{aa}$ is at position~$7$.
\label{ex:con}
\end{example}

\begin{example}[Fact~\ref{fac:trick}]
Let $[a,b]=[7,11]$ and $j=2$ (see Example~\ref{ex:con}).
The smallest element in $\{\PRE_2[11],\ldots,\PRE_2[21]\}$ is  $\PRE_2[15]=7\geq a=7$, which corresponds to rank $\APP_2[15]=1$. Indeed all other ranks $2,3,4$ have at least one occurrence within $\APP_2[7\dd 11]=1,2,4,3,2$.
\end{example}

To apply Fact~\ref{fac:trick}, we construct, in $\cO(n)$ time, an $\cO(n)$-space, $\cO(1)$-query-time RMQ data structure over $\PRE_j$; see Theorem~\ref{th:rmq}. This results in $\cO(n\ell)=\cO(n\log_\sigma n)$ preprocessing time and space over all $j$.

For querying, let us observe that $\sigma^j-S_{T[a\dd b]}(j)$, for any $T,a,b$ and increasing $j$, is non-decreasing. We can thus apply binary search on $j$ to find the smallest length $j$ such that $S_{T[a\dd b]}(j)<\sigma^j$. This results in $\cO(\log\ell)=\cO(\log\log_\sigma n)$ query time.
We obtain the following proposition (retrieving a witness shortest absent word is detailed later).

\begin{proposition}
Given a string $T$ of length $n$ over an alphabet $\Sigma\subset \{1,2,\ldots,n^{\cO(1)}\}$ of size $\sigma$, we can construct a data structure of size $\cO(n \log_{\sigma} n)$ in $\cO(n \log_{\sigma} n)$ time, so that if query $[a,b]$ is given, we can compute a shortest string over $\Sigma$ that does not occur in $T[a\dd b]$ in $\cO(\log\log_\sigma n)$ time.
\end{proposition}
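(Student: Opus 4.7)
The proof essentially wires together components already developed in the preceding exposition. First, I would invoke Proposition~\ref{th:saw} to compute the length $\ell \leq \lfloor\log_\sigma n\rfloor$ of a shortest absent word of $T$ in $\cO(n)$ time, then construct the suffix tree of $T$ within the same bound. For each $j \in [1, \ell-1]$ I would perform one DFS of the suffix tree truncated at string-depth $j$ to label each length-$j$ substring with its lexicographic rank, producing $\RNK_j$; appending the integers $1,2,\ldots,\sigma^j$ yields $\APP_j$ of length at most $2n$ (by Fact~\ref{fct:saw}); one radix sort of the pairs $(\APP_j[i], i)$ followed by a linear scan produces $\PRE_j$. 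Each $j$ costs $\cO(n)$ time and space, so in total the construction uses $\cO(n\ell) = \cO(n \log_\sigma n)$ time and space. On top of each $\PRE_j$ I would install the RMQ data structure of Theorem~\ref{th:rmq}, adding $\cO(n)$ time and space per $j$.

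To answer a query $[a,b]$, I observe that Fact~\ref{fac:trick} reduces the test ``does $T[a\dd b]$ contain every length-$j$ string'' to a single $\cO(1)$-time RMQ evaluation on $\PRE_j$ over the interval $[b-j+2,|\PRE_j|]$, comparing the returned value against $a$. Since a string of length $j{+}1$ that is absent from $T[a\dd b]$ has a prefix of length $j$ whose absence is not forced, the predicate ``$S_{T[a\dd b]}(j)=\sigma^j$'' need not be monotone itself, but the quantity $\sigma^j - S_{T[a\dd b]}(j)$ is non-decreasing in $j$ by a short counting argument (every additional letter at most multiplies the deficiency by $\sigma$ while $\sigma^{j+1} - \sigma \cdot S_{T[a\dd b]}(j) \geq 0$), so the smallest $j^*$ for which the test fails can be located by binary search over $j \in [1,\ell-1]$ in $\cO(\log \ell) = \cO(\log \log_\sigma n)$ time. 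If no such $j^*$ is found, the answer has length $\ell$ and is returned from a precomputed witness for $T$ produced by Proposition~\ref{th:saw}.

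To recover a concrete witness for a successful $j^*$, I would use the same RMQ call to extract a position $k \in [b-j^*+2, |\PRE_{j^*}|]$ with $\PRE_{j^*}[k] < a$: if $k \leq n - j^* + 1$, the fragment $T[k\dd k+j^*-1]$ is an absent word of the required length, and otherwise $\APP_{j^*}[k]$ is a rank from the appended tail. In the latter case $j^* < \ell$ ensures that every rank in $[1,\sigma^{j^*}]$ appears somewhere in $T$, so a precomputed table mapping each rank to an arbitrary occurrence lets me output a fragment of $T$; summed over $j \in [1,\ell-1]$ this table needs only $\cO(\sum_j \sigma^j) = \cO(\sigma^{\ell-1}) = \cO(n)$ space. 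The main obstacle I expect is the witness bookkeeping in the tail case; everything else reduces cleanly to Fact~\ref{fac:trick} and the monotonicity of the deficiency $\sigma^j - S_{T[a\dd b]}(j)$.
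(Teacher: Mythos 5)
Your proposal is correct and follows essentially the same route as the paper: the arrays $\RNK_j$, $\APP_j$, $\PRE_j$ with an RMQ structure per $j$, the $\cO(1)$ test of Fact~\ref{fac:trick}, and a binary search over $j$ justified by the monotonicity of $\sigma^j - S_{T[a\dd b]}(j)$. Your witness retrieval via the $\argmin$ position (plus an $\cO(n)$-size rank-to-occurrence table for the appended tail) is a valid way to fill in a detail the paper defers to its fusion-tree variant, where the same $\argmin$ idea is used.
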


We further improve the query time via employing fusion trees as follows. We create a 2d array $\FTR[1\dd \ell-1][1\dd n]$ of integers, where \[\FTR[j][i]=\min\{\PRE_j[i-j+2],\ldots,\PRE_j[|\PRE_j|]\},\]
for all $j \in [1,\ell-1]$ and $i\in[1,n]$. Intuitively, $\FTR[j][i]$ is the rightmost index of $T$ such that $T[\FTR[j][i]\dd i]$ contains all strings of length $j$ over $\Sigma$ if such an index exists and $0$ otherwise.

Array $\FTR$ can be constructed in $\cO(n\ell)=\cO(n\log_\sigma n)$ time by scanning each array $\PRE_j$ from right to left maintaining the minimum. Within the same complexities we also maintain satellite information specifying the index $k\in[i-j+2,|\PRE_j|]$ where the range minimum $\FTR[j][i]$ came from in the sub-array $\PRE_j[i-j+2\dd |\PRE_j|]$. We then construct $n$ fusion trees, one for every collection of $\ell-1$ integers in $\FTR[1\dd \ell-1][i]$. This takes total preprocessing time and space $\cO(n\ell)=\cO(n \log_\sigma n)$ by Theorem~\ref{thm:fusion}. Given the range query $[a,b]$, we need to find the smallest $j\in[1,\ell-1]$ such that $\FTR[j][b]< a$. By Theorem~\ref{thm:fusion}, we find where the predecessor of $a$ lies in $\FTR[1\dd \ell-1][b]$ in $\cO(\log_w\ell)$ time, where $w$ is the word size; this time cost is $\cO(1)$ since $w=\Theta(\log n)$.

We finally retrieve a \emph{witness} shortest absent word
as follows. If there is no $j<\ell$ such that $\FTR[j][b]<a$, then we output any shortest absent word of length $\ell$ of $T$ arbitrarily. If such a $j<\ell$ exists, by the definition of $\FTR[j][b]$, we output $T[\FTR[j][b]\dd \FTR[j][b]+j-1]$ if $\FTR[j][b]>0$ or $T[k\dd k+j-1]$ if $\FTR[j][b]=0$, where $k$ is the index of $\PRE_j$, where the minimum came from. Inspect the following illustrative example.

\begin{example}[Querying]
We construct array $\FTR$ for $T$ from Example~\ref{ex:con}. For a given $[a, b]$ we look up column $b$, and find the topmost entry whose value is less than $a$. If all entries have values greater than or equal to $a$, we output any element from $\text{SAW}_T$ arbitrarily.

\begin{table}[h]
\begin{tabular}{l|l*{18}{p{5pt}}l@{}}
     $i$&\tt1&\tt2&\tt3&\tt4&\tt5&\tt6&\tt7&\tt8&\tt9&\tt10&\tt11 &\tt12&\tt13&\tt14&\tt15&\tt16&\tt17&\tt18 \\
     $T$&\tt a&\tt b&\tt a&\tt a&\tt b&\tt a&\tt a&\tt a&\tt b&\tt b&\tt a & \tt b&\tt b&\tt b &\tt a&\tt a&\tt a&\tt b \\
     \hline
     $\FTR[1]$&\tt0&\tt1&\tt 2&\tt 2&\tt 4&\tt5&\tt 5&\tt 5&\tt 8&\tt 8&\tt 10&\tt 11&\tt 11&\tt 11&\tt 14&\tt 14&\tt 14&\tt 17\\
     $\FTR[2]$&\tt0&\tt 0&\tt0&\tt 0&\tt 0&\tt 0&\tt0&\tt 0&\tt 0&\tt5&\tt 7&\tt 7&\tt 7&\tt 7&\tt 7&\tt 11&\tt 11&\tt 13\\
     $\FTR[3]$&\tt0&\tt 0&\tt0&\tt 0&\tt 0&\tt 0&\tt 0&\tt0&\tt 0&\tt0&\tt 0&\tt 0&\tt 0&\tt 4&\tt 4&\tt4&\tt 4&\tt4\\
\end{tabular}
\end{table}

If $[a, b]= [3,14]$ then no entry in column $b=14$ is less than $a=3$, which means the length of the shortest absent word is 4; we output one from $\{\tt aaaa, \tt abab, \tt baba, \tt bbbb\}$ arbitrarily.
If $[a, b]= [5,14]$ then $\FTR[3][14]=4<5$ so the length of a shortest absent word of $T[5\dd 14]$ is 3; a shortest absent word is $T[\FTR[3][14]\dd \FTR[3][14]+3-1]=T[4\dd  6]= \tt aba$.

If $[a, b]= [7,9]$, $\FTR[2][9]=0<7$ so the length of a shortest absent word is $2$; a shortest absent word is $T[k\dd k+j-1]=T[9\dd 10]= \tt bb$ because $\FTR[2][9]=\min\{\PRE_2[9],\ldots,\PRE_2[|\PRE_2|]\}=\PRE_2[9]=0$ tells us that the minimum in this range came from index $k=9$. \end{example}

We obtain the following proposition.

\begin{proposition}\label{the:constant}
Given a string $T$ of length $n$ over an alphabet $\Sigma\subset \{1,2,\ldots,n^{\cO(1)}\}$ of size~$\sigma$, we can construct a data structure of size $\cO(n \log_{\sigma} n)$ in $\cO(n \log_{\sigma} n)$ time, so that if query $[a,b]$ is given, we can compute a shortest string over $\Sigma$ that does not occur in $T[a\dd b]$ in $\cO(1)$ time.
\end{proposition}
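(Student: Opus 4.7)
The plan is to build upon the structures (arrays $\RNK_j$, $\APP_j$, $\PRE_j$) already constructed in the $\cO(n\log_\sigma n)$-space, $\cO(\log\log_\sigma n)$-query-time preprocessing, replacing the binary search over $j$ by a single predecessor query. The key objects will be a two-dimensional array $\FTR[1\dd\ell-1][1\dd n]$ defined by
\[
\FTR[j][i]=\min\{\PRE_j[i-j+2],\ldots,\PRE_j[|\PRE_j|]\},
\]
together with a fusion tree (Theorem~\ref{thm:fusion}) built for each column $\FTR[1\dd\ell-1][i]$. The correctness pivot is Fact~\ref{fac:trick}: $T[a\dd b]$ contains every string of $\Sigma^j$ iff $\FTR[j][b]\ge a$, so the length of a shortest absent word of $T[a\dd b]$ is exactly the smallest $j\in[1,\ell-1]$ for which $\FTR[j][b]<a$, or $\ell$ if no such $j$ exists.

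For construction, I would first invoke the $\cO(n\log_\sigma n)$-time preprocessing that produces the arrays $\PRE_j$ for $j\in[1,\ell-1]$ (using the suffix tree of $T$ to rank length-$j$ substrings, then radix sort). I then compute each $\FTR[j][\cdot]$ by a single right-to-left scan of $\PRE_j$ maintaining the running minimum and, alongside it, a \emph{witness index} $k_j[i]\in[i-j+2,|\PRE_j|]$ at which this minimum is attained; both arrays take $\cO(n)$ time per $j$, hence $\cO(n\log_\sigma n)$ time and space in total. For each of the $n$ columns I then build a fusion tree on the $\ell-1=\cO(\log_\sigma n)$ integers $\FTR[1\dd\ell-1][i]$; by Theorem~\ref{thm:fusion} this costs $\cO(\ell\log_w\ell)=\cO(\ell)$ per column and $\cO(n\log_\sigma n)$ overall. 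Alongside each fusion tree I store the corresponding witness indices $k_j[i]$.

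To answer a query $[a,b]$, I perform a predecessor-of-$a$ query in the fusion tree of column $b$; by Theorem~\ref{thm:fusion} this takes $\cO(\log_w\ell)=\cO(1)$ time since $\ell\le\log_\sigma n$ and $w=\Theta(\log n)$. The query locates the smallest $j^\star$ with $\FTR[j^\star][b]<a$. If no such $j^\star$ exists in $[1,\ell-1]$, I output any precomputed element of $\text{SAW}_T$ (one can be computed and stored at construction time by Proposition~\ref{th:saw}). Otherwise the witness of length $j^\star$ is reconstructed in $\cO(1)$ time as $T[\FTR[j^\star][b]\dd \FTR[j^\star][b]+j^\star-1]$ if $\FTR[j^\star][b]>0$, and as $T[k\dd k+j^\star-1]$ using the stored witness index $k=k_{j^\star}[b]$ when $\FTR[j^\star][b]=0$; the latter position lies inside $[1,n]$ because the answer length $j^\star$ is strictly less than $\ell\le\log_\sigma n$.

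I expect the only subtle point to be verifying that the predecessor query really returns the smallest $j^\star$ we want, as opposed to some other $j$ with $\FTR[j][b]<a$. This is not automatic from predecessor functionality alone, but the monotonicity noted before Proposition~\ref{the:constant}, namely that $\sigma^j-S_{T[a\dd b]}(j)$ is non-decreasing in $j$, implies that once $\FTR[j][b]<a$ holds for some $j$ it continues to hold for all larger $j\in[1,\ell-1]$; equivalently, the set $\{j:\FTR[j][b]\ge a\}$ is a prefix of $[1,\ell-1]$. Hence locating the predecessor of $a$ in the sorted order of values pinpoints the transition index $j^\star$ directly, and the fusion tree returns the associated $j$ as satellite data in $\cO(1)$ time, completing the proof.
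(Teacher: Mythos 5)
Your proposal is correct and follows essentially the same route as the paper's own proof: the same array $\FTR[1\dd\ell-1][1\dd n]$ with satellite $\argmin$ indices, one fusion tree per column, a predecessor-of-$a$ query in column $b$, and the identical two-case witness retrieval depending on whether $\FTR[j^\star][b]>0$. The monotonicity of the column in $j$, which you rightly flag as the only subtle point, is exactly the observation the paper relies on, so nothing is missing.
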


\section{\boldmath $\cO(n)$ Space and $\cO(\log\log_\sigma n)$ Query Time}\label{sec:loglog}

\begin{definition}[Order-$j$ Fragment]
Given a string $T$ over an alphabet of size $\sigma$ and an integer $j$, $V$ is called an \emph{order-$j$ fragment} of $T$ if and only if $V$ is a fragment of $T$ and $S_V(j)=\sigma^j$. $V$ is further called a \emph{minimal order-$j$ fragment} of $T$ if $S_U(j)<\sigma^j$ and $S_Z(j)<\sigma^j$ for $U=V[1\dd |V|-1]$ and $Z=V[2\dd |V|]$.
\end{definition}

In particular, minimal order-$j$ fragments are pairwise not included in each other. The following fact follows directly.

\begin{fact}\label{fct:V}
Given a string $T$ of length $n$ over an alphabet of size $\sigma$ and an integer $j$
we have $\cO(n)$ minimal order-$j$ fragments. Moreover, an arbitrary fragment $F$ of $T$ has $S_F[j]=\sigma^j$ if and only if it contains at least one of these minimal fragments.
\end{fact}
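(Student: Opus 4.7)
The plan is to separate the two claims. For the bound on the number of minimal order-$j$ fragments, I would rely on the inclusion-freeness observation that the paper highlights immediately after the definition: if $V_1$ were a proper substring of $V_2$, then either $V_2[1\dd |V_2|-1]$ or $V_2[2\dd |V_2|]$ would still contain $V_1$ as a substring and therefore satisfy $S(\cdot)(j)=\sigma^j$, contradicting the minimality of $V_2$. Once inclusion-freeness is established, sorting the minimal order-$j$ fragments by their starting positions in $T$ forces their ending positions to be strictly increasing as well (otherwise two of them would be nested). Hence distinct minimal order-$j$ fragments have distinct starting positions, and there are at most $n$ of them.

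For the characterization, the ``if'' direction is essentially immediate. If $F$ contains a minimal order-$j$ fragment $V$ as a substring, then every string in $\Sigma^j$ occurs in $V$ (since $S_V(j)=\sigma^j$), and hence in $F$, so $S_F(j)=\sigma^j$.

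For the ``only if'' direction I would use a two-sided shrinking argument. Assume $F$ is a fragment of $T$ with $S_F(j)=\sigma^j$. Starting from $V\leftarrow F$, repeatedly test the two one-letter contractions $V[1\dd |V|-1]$ and $V[2\dd |V|]$: if either of them still has substring complexity $\sigma^j$ at length $j$, replace $V$ by that shorter fragment and continue. Since the length strictly decreases at each step and is lower-bounded by $j$, the process terminates. At termination, $V$ still satisfies $S_V(j)=\sigma^j$ (maintained as an invariant), both one-letter contractions fail the property, so $V$ is a minimal order-$j$ fragment contained in $F$.

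I do not expect any genuine obstacle here: the only place that needs care is checking that the inclusion-free property of the collection is enough to get the linear bound (as opposed to, say, arguing via Fact~\ref{fct:saw} or a counting argument on length-$j$ windows), and that the shrinking procedure preserves containment in the original $F$, which is immediate from the fact that each contraction removes a single endpoint letter.
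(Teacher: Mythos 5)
Your proof is correct and follows the same route the paper takes implicitly: the paper states the fact ``follows directly'' from the observation that minimal order-$j$ fragments are pairwise not included in each other, and your argument simply spells out the details (inclusion-freeness forces distinct starting positions, hence at most $n$ fragments; the characterization via the monotonicity of containment and the endpoint-shrinking procedure). No gaps.
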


For each $j \in [1, \log_\sigma n]$, we consider all minimal order-$j$ fragments $T$, separately. We encode the minimal order-$j$ fragments of $T$ using two bit vectors $\OP_j$ and $\CP_j$, standing for starting positions and ending positions. Inspect the following example.

\begin{example}\label{ex:opcp}
We consider $T$ from Example~\ref{ex:con} and $j=2$.

\vspace{0.2cm}

\scalebox{0.97}{
\begin{tabular}{l|l*{25}{p{5pt}}l@{}}
     $i$&\tt1&\tt2&\tt3&\tt4&\tt5&\tt6&\tt7&\tt8&\tt9&\tt10&\tt11 &\tt12&\tt13&\tt14&\tt15&\tt16&\tt17&\tt18&\tt19&\tt20&\tt21 \\
     $T$&\tt a&\tt b&\tt a&\tt a&\tt b&\tt a&\tt a&\tt a&\tt b&\tt b&\tt a & \tt b&\tt b&\tt b &\tt a&\tt a&\tt a&\tt b \\

     \hline
     $\APP_2$&\tt 2&\tt 3&\tt 1&\tt 2&\tt 3&\tt 1&\tt 1&\tt 2&\tt 4&\tt 3&\tt 2&\tt 4&\tt 4&\tt 3&\tt 1&\tt 1&\tt 2&\tt 1&\tt 2&\tt 3&\tt 4\\
     $\PRE_2$&\tt0&\tt 0&\tt0&\tt 1&\tt 2&\tt 3&\tt 6&\tt 4&\tt 0&\tt5&\tt 8&\tt 9&\tt 12&\tt 10&\tt 7&\tt 15&\tt 11&\tt 16&\tt 17&\tt 14&\tt 13\\
     $\OP_2$&\tt 0&\tt 0&\tt 0&\tt 0&\tt 1&\tt 0&\tt 1&\tt 0&\tt 0&\tt 0&\tt 1&\tt 0&\tt 1&\tt 0&\tt 0&\tt 0&\tt 0&\tt 0\\
     $\CP_2$&\tt0&\tt 0&\tt0&\tt 0&\tt 0&\tt 0&\tt 0&\tt 0&\tt 0&\tt1&\tt 1&\tt 0&\tt 0&\tt 0&\tt 0&\tt 1&\tt 0&\tt 1\\

\end{tabular}}

\vspace{0.2cm}

For instance, $\OP_2[13]=1$ and $\CP_2[18]=1$ denote the minimal order-$2$ fragment $V=T[13\dd 18]=\texttt{bbaaab}$.
\end{example}

We construct a rank and select data structure on $\OP_j$ and $\CP_j$, for all $j \in [1, \ell-1]$ supporting $\cO(1)$-time queries. The overall space is $\cO(n)$ by Theorem~\ref{th:rs} and Fact~\ref{fct:saw}.

Let us now explain how this data structure enables fast computation of absent words of length $j$. Given a range $[a,b]$, by Fact~\ref{fct:V}, we only need to find whether $T[a\dd b]$ contains a minimal order-$j$ fragment. We can do this in $\cO(1)$ time using one rank and one select query: $t= \rank_1(a-1, \OP_j)+1$
and $\select_1(t, \CP_j)$. The select query returns the  ending position of the leftmost minimal order-$j$ fragment that starts after the position $a - 1$; it remains to check whether this minimal order-$j$ fragment is inside $[a,b]$.

\begin{example}
We consider $T$, $\OP_2$ and $\CP_2$ from Example~\ref{ex:opcp}. Let $[a,b]=[5,14]$.
We have $t= \rank_1(a-1, \OP_2)+1=\rank_1(4, \OP_2)+1=1$, $\select_1(t, \OP_2)=\select_1(1, \OP_2)= 5<b=14$ and $\select_1(t, \CP_2)=\select_1(1, \CP_2)= 10< b = 14$, which means $T[5,14]$ contains a minimal order-$2$ fragment.
\end{example}

Let us now describe a time-efficient construction of $\OP_j$ and $\CP_j$. We use arrays $\PRE_j$ and $\APP_j$ of $T$, which are constructible in $\cO(n)$ time (see Section~\ref{sec:constant}). Recall that $\PRE_j[i]$ stores the position of the rightmost occurrence of rank $\APP_j[i]$ in $\APP_j[1\dd i-1]$ (or $0$ if such an occurrence does not exist). We apply Fact~\ref{fac:trick} as follows.
We start with all bits of $\OP_j$ and $\CP_j$ unset.
Then, for each $b \in [1,n]$ for which $\PRE_j[b-j+1]<\min\{\PRE_j[i]:i\in[b-j+2,|\PRE_j|]\} = a$, we set the $b$th bit of $\CP_j$ and the $a$th bit of $\OP_j$.
This can be done online in a right-to-left scan of $\PRE_j$ in $\cO(n)$ time.

\begin{example}
We consider $T$, $\OP_2$ and $\CP_2$ from Example~\ref{ex:opcp}. We start by setting $b=n=18$ and scan $\PRE_2$ from right to left: we have $a=13$ because
$\min\{\PRE_2[i]:i\in[18,21]\} = 13$.
This gives fragment $T[13\dd 18]$, which is minimal since $\PRE_2[b-1] = \PRE_2[17] < 13$.
Then we set $b=n-1=17$ and have $a=11$ because $\min\{\PRE_2[i]:i\in[17,21]\} = 11$.
This gives fragment $T[11\dd 17]$, which is not minimal since $\PRE_2[b-1] = \PRE_2[16] \geq 11$.
Then we set $b=n-2=16$ and have $a=11$ because $\min\{\PRE_2[i]:i\in[16,21]\}=11$.
This gives fragment $T[11\dd 16]$, which is minimal since $\PRE_2[b-1] = \PRE_2[15] < 11$ .
\end{example}

\begin{lemma}\label{lem:cpop}
$\OP_j$ and $\CP_j$ can be constructed in $\cO(n)$ time.
\end{lemma}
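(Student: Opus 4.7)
The plan is essentially to justify the right-to-left scan described immediately before the statement. Arrays $\APP_j$ and $\PRE_j$ are available in $\cO(n)$ time by Section~\ref{sec:constant}, and $\OP_j,\CP_j$ are initialized to the all-zero bit vector of length $n$. For every $b$ we let
\[
a_b \;:=\; \min\{\PRE_j[i] : i \in [b-j+2, |\PRE_j|]\},
\]
which, as $b$ decreases by one, is updated in $\cO(1)$ via $a_{b-1} = \min(a_b, \PRE_j[b-j+1])$. So the whole scan is $\cO(n)$; the question is only whether flipping the bits $\OP_j[a_b]$ and $\CP_j[b]$ exactly when $\PRE_j[b-j+1] < a_b$ enumerates precisely the minimal order-$j$ fragments of $T$.

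The correctness proof is a direct use of Fact~\ref{fac:trick}. That fact says $T[a\dd b]$ is order-$j$ iff $a \le a_b$; hence, for a fixed right endpoint $b$, the unique shortest order-$j$ fragment ending at $b$ (if any) is $T[a_b \dd b]$. For $T[a_b\dd b]$ to be minimal we need (i) $T[a_b+1\dd b]$ not order-$j$, which is automatic since $a_b+1 > a_b$, and (ii) $T[a_b\dd b-1]$ not order-$j$, which by Fact~\ref{fac:trick} amounts to $a_b > a_{b-1}$. Since $a_{b-1} = \min(a_b, \PRE_j[b-j+1])$, this is equivalent to $\PRE_j[b-j+1] < a_b$, which is exactly the condition checked by the algorithm. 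Conversely, every minimal order-$j$ fragment has the form $T[a_b\dd b]$ for some $b$, and the same computation shows that the condition holds precisely at that $b$, so no minimal fragment is missed and no spurious bit is set.

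The small boundary issues are routine: for $b < j$ the fragment $T[1\dd b]$ has length less than $j$ and so cannot be order-$j$, and for $b \geq j$ the index $b-j+1$ is a valid position in $\PRE_j$ and the window $[b-j+2,|\PRE_j|]$ is nonempty because $|\PRE_j| = |\APP_j| \geq n \geq b - j + 2$. Thus starting the scan at $b = n$ and terminating at $b = j$ is well-defined. Combining the $\cO(n)$ preprocessing for $\APP_j,\PRE_j$ with the linear scan gives the claimed $\cO(n)$ bound. If there is any obstacle worth flagging, it is merely the bookkeeping needed to verify that ``strictly less than'' in the update condition correctly separates the cases $a_{b-1} = a_b$ and $a_{b-1} < a_b$; this is settled by the strict inequality $a_b > a_{b-1}$ derived above.
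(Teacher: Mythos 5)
Your proof is correct and takes essentially the same approach as the paper: the paper states the lemma without a formal proof, relying on the right-to-left scan of $\PRE_j$ with the running minimum $a_b=\min\{\PRE_j[i]:i\in[b-j+2,|\PRE_j|]\}$ described in the preceding text, which is exactly your algorithm. Your added correctness argument --- that by Fact~\ref{fac:trick} the shortest order-$j$ fragment ending at $b$ is $T[a_b\dd b]$, and that it is minimal precisely when $\PRE_j[b-j+1]<a_b$ --- merely spells out what the paper leaves implicit.
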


For all $j$, the construction time is $\cO(n\ell)=\cO(n\log_\sigma n)$ by Theorem~\ref{th:rs}, Lemma~\ref{lem:cpop}, and Fact~\ref{fct:saw}. All the arrays $\OP_j$ and $\CP_j$ in total occupy $\cO(n\ell) = \cO(n\log_\sigma n)$ \emph{bits} of space, which is $\cO(n)$ space when measured in $\Theta(\log n)$-bit machine words. We obtain the following lemma.

\begin{lemma}\label{lem:fixedj}
Given a string $T$ of length $n$ over an alphabet $\Sigma\subset \{1,2,\ldots,n^{\cO(1)}\}$ of size $\sigma$, we can construct a data structure of size $\cO(n)$ in $\cO(n \log_{\sigma} n)$ time, so that if query $(j,[a,b])$ is given, we can check in $\cO(1)$ time whether there is any string in $\Sigma^j$ that does not occur in $T[a\dd b]$, and if so return such a string.
\end{lemma}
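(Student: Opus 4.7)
The plan is to assemble the data structure from the components already prepared and then add a witness-recovery mechanism. For each $j\in[1,\ell-1]$, Lemma~\ref{lem:cpop} produces $\OP_j$ and $\CP_j$ in $\cO(n)$ time, and Theorem~\ref{th:rs} augments them with $\cO(1)$-time rank and select in $\cO(n)$ additional time and $o(n)$ extra bits. Summing over $j$ via Fact~\ref{fct:saw} yields $\cO(n\log_\sigma n)$ total construction time and $\cO(n\ell)=\cO(n\log_\sigma n)$ bits of space, i.e., $\cO(n)$ machine words. For $j\geq\ell$, $T$ itself misses some length-$j$ string, so I precompute one such string (via Proposition~\ref{th:saw}, padded with an arbitrary letter if $j>\ell$) and store its $\cO(1)$-space handle.

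The decision step for a query $(j,[a,b])$ with $j<\ell$ is exactly the one spelled out in the text preceding the lemma: by Fact~\ref{fct:V}, I check whether a minimal order-$j$ fragment lies inside $[a,b]$ by computing $t=\rank_1(a-1,\OP_j)+1$ and testing $\select_1(t,\CP_j)\le b$ in $\cO(1)$ time. For $j\ge\ell$ the answer is always ``missing'' and the precomputed witness is returned.

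The new ingredient is witness recovery for $j<\ell$ when the decision is ``missing''. The key structural fact is that minimal order-$j$ fragments are pairwise inclusion-free, so their starting and ending positions appear in the same relative order. Letting $k:=\rank_1(b,\CP_j)$, the largest starting position among minimal fragments ending at or before $b$ is $s:=\select_1(k,\OP_j)$; by the sortedness argument this equals $\FTR[j][b]$ from Section~\ref{sec:constant}. Since the $k$-th fragment $[s,e_k]$ satisfies $e_k\leq b$ and is not contained in $[a,b]$, we must have $s<a$. I then output $T[s\dd s+j-1]$: were it to occur in $T[s+1\dd b]$, the sets of length-$j$ substrings of $T[s\dd b]$ and $T[s+1\dd b]$ would coincide and both equal $\Sigma^j$, contradicting the rightmost-maximality of $s$; hence $T[s\dd s+j-1]$ is absent from $T[s+1\dd b]\supseteq T[a\dd b]$. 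The only boundary case is $k=0$ with $j<\ell$, where I use a precomputed $e_1:=\select_1(1,\CP_j)$ (well defined because $j<\ell$ guarantees that $T$ contains a minimal order-$j$ fragment) and return $T[e_1-j+1\dd e_1]$, which the analogous argument shows is absent from $T[1\dd e_1-1]\supseteq T[a\dd b]$. The step I expect to require the most care is the combinatorial argument that $T[s\dd s+j-1]$ is genuinely absent from $T[a\dd b]$, together with the identification $\select_1(k,\OP_j)=\FTR[j][b]$ via the inclusion-free/sortedness property of minimal order-$j$ fragments; everything else reuses already-established tools.
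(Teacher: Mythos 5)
Your proposal is correct and takes essentially the same approach as the paper: the same $\OP_j/\CP_j$ bit vectors with rank/select built via Lemma~\ref{lem:cpop} and Theorem~\ref{th:rs}, and the same Fact~\ref{fct:V}-based decision test $t=\rank_1(a-1,\OP_j)+1$, $\select_1(t,\CP_j)\le b$. The only (harmless) deviation is the witness: the paper outputs the length-$j$ \emph{suffix} of the leftmost minimal order-$j$ fragment starting at or after $a$, whereas you output the length-$j$ \emph{prefix} $T[\FTR[j][b]\dd \FTR[j][b]+j-1]$ of the rightmost one ending at or before $b$ (matching the witness of Section~\ref{sec:constant}) --- a mirror-image choice that your inclusion-freeness and rightmost-maximality argument correctly justifies.
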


We can now perform binary search on $j$ using Lemma~\ref{lem:fixedj} to find the smallest $j$ for which $S_{T[a\dd b]}(j)<\sigma^j$. This results in $\cO(\log\ell)=\cO(\log\log_\sigma n)$ query time by Fact~\ref{fct:saw}. It should now be clear that when we find the $j$ corresponding to the length of a shortest absent word, we can output the length-$j$ suffix of the leftmost minimal order-$j$ fragment starting after $a$.
Note that outputting this suffix is correct by the definition of minimal order-$j$ fragments.

\begin{example}
We consider $T$, $\OP_2$ and $\CP_2$ from Example~\ref{ex:opcp}. Let $[a,b]=[2,7]$. The length of a shortest absent word of $T[2\dd 7]$ is $2$. We output $\texttt{bb}$, which is the length-$2$ suffix of the leftmost minimal order-$2$ fragment $T[5\dd 10]=\texttt{baaabb}$ starting after $a=2$.
\end{example}

We obtain the following result.

\begin{proposition}\label{the:loglog}
Given a string $T$ of length $n$ over an alphabet $\Sigma\subset \{1,2,\ldots,n^{\cO(1)}\}$ of size~$\sigma$, we can construct a data structure of size $\cO(n)$ in $\cO(n \log_{\sigma} n)$ time, so that if query $[a,b]$ is given, we can compute a shortest string over $\Sigma$ that does not occur in $T[a\dd b]$ in $\cO(\log\log_\sigma n)$ time.
\end{proposition}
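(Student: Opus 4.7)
The plan is to combine Lemma~\ref{lem:fixedj} with binary search on the candidate length~$j$. By Fact~\ref{fct:saw}, the length $\ell^{\star}$ of a shortest absent word of $T[a\dd b]$ lies in $[1,\lfloor\log_\sigma n\rfloor]$, so there are only $\cO(\log_\sigma n)$ candidate values of~$j$. The key monotonicity observation is that if $T[a\dd b]$ contains every string of length $j+1$, then it contains every string of length~$j$, since each length-$j$ string is a prefix of some length-$(j+1)$ string. Hence the predicate ``$T[a\dd b]$ has an absent word of length~$j$'' is monotone nondecreasing in~$j$, and I would binary-search for the smallest such $j^{\star}$, invoking the fixed-$j$ test of Lemma~\ref{lem:fixedj} at each probe. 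Each probe costs $\cO(1)$ time, giving $\cO(\log\log_\sigma n)$ total query time, while the $\cO(n)$ space and $\cO(n\log_\sigma n)$ construction time are inherited directly from Lemma~\ref{lem:fixedj}.

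For the witness, I would retrieve via the $\OP_{j^{\star}}$ and $\CP_{j^{\star}}$ bit vectors the leftmost minimal order-$j^{\star}$ fragment $V=T[a'\dd b']$ with $a'\geq a$, and output its length-$j^{\star}$ suffix $P:=T[b'-j^{\star}+1\dd b']$. By the choice of $j^{\star}$, no minimal order-$j^{\star}$ fragment fits inside $[a,b]$, so in particular $b'>b$. Minimality of $V$ ensures that the only length-$j^{\star}$ substring of $V$ ``lost'' when the last character of $V$ is deleted is $P$ itself; thus $P$ has no occurrence in $T[a'\dd b'-1]$, and a fortiori none inside $T[a'\dd b]$ since $b<b'$.

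The main obstacle is to rule out an occurrence of $P$ inside the prefix $T[a\dd a'-1]$, which is the part of $T[a\dd b]$ not covered by the argument above. I would proceed by contradiction: suppose $P$ occurred at some position $p$ with $a\leq p<a'$ and $p+j^{\star}-1\leq b<b'$. Since $T[p\dd b']\supseteq V$, this superfragment contains every length-$j^{\star}$ string, and deleting its last character preserves all of them, because the occurrence of $P$ at $p$ still fits entirely inside $T[p\dd b'-1]$ (using $p\leq b'-j^{\star}$). By Fact~\ref{fct:V}, $T[p\dd b'-1]$ would then contain some minimal order-$j^{\star}$ fragment $V_\circ=T[a_\circ\dd b_\circ]$ with $a\leq p\leq a_\circ$ and $b_\circ\leq b'-1$. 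But minimal order-$j^{\star}$ fragments are pairwise non-nested, so their endpoints are sorted consistently; combined with $b_\circ<b'$ this forces $a_\circ<a'$, contradicting the choice of $V$ as the leftmost minimal order-$j^{\star}$ fragment with starting position at least~$a$. This closes the argument and establishes that $P$ is indeed a shortest absent word of $T[a\dd b]$.
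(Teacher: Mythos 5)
Your proposal is correct and follows essentially the same route as the paper: binary search on $j$ over the $\cO(\log_\sigma n)$ candidate lengths using the constant-time fixed-length test of Lemma~\ref{lem:fixedj}, with the witness taken as the length-$j^{\star}$ suffix of the leftmost minimal order-$j^{\star}$ fragment starting at or after $a$. Your final paragraph merely spells out, via the pairwise non-nesting of minimal order-$j$ fragments, the correctness argument that the paper compresses into the remark that outputting this suffix ``is correct by the definition of minimal order-$j$ fragments.''
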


\section{Combinatorial Insights}\label{sec:comb}

A positive integer $p$ is a \emph{period} of a string $S$ if $S[i] = S[i + p]$ for all $i \in [1, |S| - p]$.  We refer to the smallest period as \emph{the period} of the string. Let us state the periodicity lemma, one of the most elegant combinatorial results on strings.

\begin{lemma}[Periodicity Lemma (weak version)~\cite{FineWilf}] If a string $S$ has periods $p$ and $q$ such that $p+q \leq |S|$, then $\textsf{gcd}(p, q)$ is also a period of $S$.\label{lem:per}
\end{lemma}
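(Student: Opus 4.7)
The plan is to prove the Periodicity Lemma by strong induction on $\min(p,q)$, mirroring the Euclidean algorithm. Without loss of generality assume $p \ge q$. The base case is $p = q$, where $\gcd(p,q) = p$ is trivially a period. For the inductive step, the goal is to show that $p-q$ is also a period of $S$; since $(p-q) + q = p \le p + q - q < p+q \le |S|$ (strictly, once $q \ge 1$), the pair $(q, p-q)$ still satisfies the hypothesis of the lemma and we can apply the inductive hypothesis to conclude that $\gcd(q, p-q) = \gcd(p,q)$ is a period of $S$.

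The heart of the argument is therefore the verification that $p-q$ is a period, i.e., $S[i] = S[i + p - q]$ for every $i \in [1, |S| - (p-q)]$. I would split this into two cases based on whether the index $i + p$ is inside the string. In the first case, $i + p \le |S|$, so the period $p$ gives $S[i] = S[i+p]$, and the period $q$ applied at position $i+p-q$ (which is valid because $i+p-q \ge 1$ and $i+p \le |S|$) gives $S[i+p-q] = S[i+p]$; chaining these yields $S[i] = S[i+p-q]$. In the second case, $i + p > |S|$; combined with $i \le |S| - (p - q)$, this forces $i > q$, so I can apply the period $q$ at position $i-q$ to get $S[i-q] = S[i]$, and then the period $p$ at position $i-q$ (with $i-q+p = i + p - q \le |S|$) to get $S[i-q] = S[i+p-q]$, again giving the desired equality.

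The main obstacle, and the reason the statement assumes $p + q \le |S|$ rather than the weaker bound $\max(p,q) \le |S|$, is ensuring that in both cases one of the two positions involved in each period application actually lies within $[1,|S|]$. I would track these index constraints carefully, in particular verifying that when $i + p > |S|$ we indeed have $i > q$ (which follows from $|S| < i + p$ and $p + q \le |S|$), so that the shift by $q$ on the left is legal. Once both cases are handled, the induction closes immediately, and the conclusion $\gcd(p,q)\mid$ period of $S$ follows. I would also briefly remark that the hypothesis $p + q \le |S|$ is tight, as witnessed by Fibonacci-type strings, though this is not needed for the proof itself.
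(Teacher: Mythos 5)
The paper does not actually prove this lemma: it is quoted as a known result with a citation to Fine and Wilf, so there is no in-paper argument to compare against. Your proof is the standard subtractive (Euclidean) argument, and its core --- the two-case verification that $p-q$ is a period, splitting on whether $i+p \le |S|$ and checking in each case that the positions at which the periods $p$ and $q$ are applied lie in $[1,|S|]$ --- is correct and complete; in particular, the deduction $q < i$ from $p+q \le |S| < i+p$ in the second case is exactly where the hypothesis $p+q \le |S|$ is needed, and you identify it correctly. The one genuine slip is the induction measure: you induct on $\min(p,q)$, but for the new pair $(q,\,p-q)$ with $p \ge 2q$ one has $\min(q,\,p-q) = q = \min(p,q)$, so the inductive hypothesis is not available and the induction does not close as stated. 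This is repaired by inducting instead on $p+q$ (which drops to $q+(p-q)=p < p+q$) or on $\max(p,q)$ (which drops to at most $p-1$ once $q<p$); together with the check you already make that the new pair satisfies the hypothesis, i.e.\ $q+(p-q)=p\le |S|$, the rest of your argument goes through unchanged.
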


\begin{lemma}\label{lem:long}
If all strings in $\{UW: U \in \Sigma^k\}$ for $W\neq \varepsilon$ occur in some string $S$, then $|S|\geq |W| \cdot \sigma^k/4$.
\end{lemma}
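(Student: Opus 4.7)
The plan is to split based on whether the shortest period $p$ of $W$ satisfies $p > |W|/2$ (call this the \emph{aperiodic} case) or $p \leq |W|/2$ (the \emph{periodic} case). The starting observation, which I would establish first, is that since the $\sigma^k$ strings $UW$ are pairwise distinct and of the same length, $S$ must contain at least $\sigma^k$ distinct occurrences of $W$; moreover, any two occurrences of $W$ at distance $d \leq |W|$ force the overlap to have period $d$, so $d$ must be a period of $W$ and therefore $d \geq p$.

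In the aperiodic case, consecutive occurrences of $W$ in $S$ are at distance at least $p > |W|/2$. Thus the $m \geq \sigma^k$ occurrences span at least $(m-1)|W|/2$ positions, and I immediately obtain $|S| \geq (\sigma^k - 1)|W|/2 + |W| \geq \sigma^k|W|/2 \geq \sigma^k|W|/4$.

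In the periodic case, let $P = W[1\dd p]$, so that $W$ is a prefix of the infinite string $PPP\cdots$. I would look at the maximal substrings of $S$ of length at least $|W|$ that have period $p$ and begin in phase $P$; call these \emph{useful runs}. They are pairwise disjoint, and by the observation on distances every occurrence of $W$ in $S$ sits inside some useful run. Inside a useful run $[r,r+L-1]$ the occurrences of $W$ are exactly at positions $r, r+p, r+2p, \ldots$, and the $k$-gram $U^{(i)} = S[r+ip-k \dd r+ip-1]$ that precedes the $i$-th occurrence has this crucial property: for $i \geq 1$ the window $U^{(i)}$ lies far enough inside the periodic region that its last $\min(p,k)$ characters coincide with $P$ (or with $P[1\dd k]$ in the edge case $p > k$). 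The decisive step is then: any $V \in \Sigma^k$ whose last $p$ characters differ from $P$ (resp.\ any $V \neq P[1\dd k]$ when $p > k$) can appear among the preceding $k$-grams \emph{only} as the context $U^{(0)} = S[r-k \dd r-1]$ of some useful run. Since all such $V$ are among the $\sigma^k$ prescribed preceding $k$-grams, the number of useful runs is at least $\sigma^k - \sigma^{\max(0,k-p)} = \sigma^{\max(0,k-p)}(\sigma^{\min(p,k)}-1) \geq \sigma^k/2$ (using $\sigma \geq 2$). Each useful run contributes at least $|W|$ to $|S|$, so $|S| \geq (\sigma^k/2)|W|$, comfortably yielding the claimed bound.

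The main obstacle I expect is verifying the structural claim in the periodic case --- namely that $U^{(i)}$ for $i\geq 1$ has its last $\min(p,k)$ characters frozen to $P$ --- which requires carefully tracking positions relative to the run boundary and separately handling the edge case $p > k$ where $P$ is longer than $U$. Everything else (the disjointness of useful runs, the count of candidate $V$'s, and the final arithmetic) is routine bookkeeping.
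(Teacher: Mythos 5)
Your argument is correct in substance but takes a genuinely different route from the paper's. The paper avoids your case split entirely: it fixes the letter $a$ with the property that $aW$ still has period $p$ (equivalently, $a=W[p]$), discards the $\sigma^{k-1}$ strings $UW$ in which $U$ ends with $a$, and shows via the periodicity lemma that occurrences of any two of the remaining $\geq\sigma^k/2$ strings must start at least $|W|/2$ apart --- in effect it only ever ``freezes'' the single character immediately preceding a non-initial occurrence of $W$ in a periodic stretch, which is the degenerate case of your observation that the last $\min(p,k)$ characters of $U^{(i)}$ are forced for $i\geq 1$. Your run-based decomposition is more structural and explains \emph{where} the occurrences live, at the cost of a case analysis and more bookkeeping; the paper's version is shorter and uniform. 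Two small corrections to your sketch, neither fatal: (i) in the edge case $p>k$ the frozen block is the length-$k$ \emph{suffix} $P[p-k+1\dd p]$ of $P$, not the prefix $P[1\dd k]$ (the count $\sigma^k-\sigma^{\max(0,k-p)}$ is unaffected); (ii) two distinct maximal runs with period $p$ need not be disjoint --- they can overlap in up to $p-1$ positions --- so each of your $\geq\sigma^k/2$ useful runs is only guaranteed to contribute $|W|-p+1\geq |W|/2$ fresh positions rather than $|W|$, which yields $|S|\geq \sigma^k|W|/4$ rather than $\sigma^k|W|/2$, still exactly the claimed bound.
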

\begin{proof}
Let $p$ be the period of $W$, and let $a \in \Sigma$ be such that the period of $aW$ is also~$p$.
All strings $Z b W$ for a letter $b\neq a$ and $Z \in \Sigma^{k-1}$ must occur in $S$.
Let $A=\{UW : U\in\Sigma^{k}\} \setminus \{ZaW : Z\in \Sigma^{k-1}\}$, and note that it is of size $\sigma^{k}-\sigma^{k-1}\geq \sigma^{k}/2$.
The following claim immediately implies the statement of the lemma.

\begin{claim*}
Let $i$ and $j$ be starting positions of occurrences of different strings $UW, VW \in A$ in~$S$, respectively.
Then, we have $|j-i| \geq  |W|/2$.
\end{claim*}
\begin{proof}
Let us assume, without loss of generality, that $j>i$.
Further, let us assume towards a contradiction that $j-i <|W|/2$.
Then, $j-i$ is a period of $W$ and $p+j-i \leq |W|$ since $p\leq j-i$. Therefore, due to the periodicity lemma (Lemma~\ref{lem:per}), $j-i$ must be divisible by the period $p$ of $W$.
Hence, $V$ ends with the letter $a$ and $VW \notin A$, a contradiction.
\end{proof}
This concludes the proof of this lemma.
\end{proof}

\begin{lemma}\label{lem:ext}
If a shortest absent word of a string $X$ is of length $\lambda$, then the length of a shortest absent word of $XY$ is in $[\lambda, \lambda + \max\{10,4+\log_{\sigma} (|Y|/\lambda)\}]$.
\end{lemma}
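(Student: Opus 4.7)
The plan is to prove the two bounds separately. The lower bound $\lambda$ is immediate: since $X$ is a prefix (hence a substring) of $XY$, any word absent from $XY$ is also absent from $X$, so the shortest absent word of $XY$ has length at least $\lambda$.

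For the upper bound, let $\mu$ denote the length of a shortest absent word of $XY$ and set $d = \mu - \lambda$; I may assume $d \geq 1$, as otherwise the bound is trivial. Fix a shortest absent word $W$ of $X$, so $|W| = \lambda$ and $W$ does not occur in $X$. By the definition of $\mu$, every string of length $\mu - 1 = \lambda + d - 1$ occurs in $XY$, and in particular every member of the family $\{UW : U \in \Sigma^{d-1}\}$ does. The key observation is that because $W$ is absent from $X$, every occurrence of any such $UW$ in $XY$ must end at a position strictly greater than $|X|$; consequently all $\sigma^{d-1}$ of these strings occur inside the substring $S$ of $XY$ that begins at position $\max(1, |X| - \lambda - d + 3)$ and runs to the end, whose length is at most $|Y| + \lambda + d - 2$.

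Applying Lemma~\ref{lem:long} (with the $W$ of the lemma equal to our $W$ and $k = d - 1$) then gives $|S| \geq \lambda \sigma^{d-1}/4$, which combined with the upper bound on $|S|$ and with $\lambda \geq 1$ yields $\sigma^{d-1} \leq 4|Y|/\lambda + 4d$. To conclude, I would argue by contradiction: suppose $d$ exceeds $\max\{10, 4 + \log_\sigma(|Y|/\lambda)\}$. Then $d \geq 11$ and $\sigma^{d-4} > |Y|/\lambda$, which turns the displayed inequality into $(\sigma^3 - 4)\sigma^{d-4} < 4d$. Since $\sigma \geq 2$ gives $\sigma^3 - 4 \geq 4$, this forces $\sigma^{d-4} < d$, contradicting $2^{d-4} \geq 2^7 = 128 > d$ for all $d \geq 11$. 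Hence $d \leq \max\{10, 4 + \log_\sigma(|Y|/\lambda)\}$, as required.

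I expect the main obstacle to be the conceptual step of selecting the correct witness family $\{UW : U \in \Sigma^{d-1}\}$ (rather than arbitrary length-$(\mu-1)$ strings) and confining its occurrences to the short window $S$ straddling the $X$/$Y$ boundary; once this localization is in place, Lemma~\ref{lem:long} does the combinatorial heavy lifting, and what remains is only the careful case analysis needed to pin down the precise constants $10$ and $4$ that appear in the statement.
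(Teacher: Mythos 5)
Your proof is correct and follows essentially the same route as the paper's: the same witness family $\{UW : U \in \Sigma^{d-1}\}$, the same localization of its occurrences to the window of length $|Y|+\lambda+d-2$ straddling the $X$/$Y$ boundary (the paper writes this window as $X[|X|-|UW|+2 \dd |X|]\cdot Y$), and the same invocation of Lemma~\ref{lem:long}, differing only in the final arithmetic (you argue by contradiction where the paper uses $\lambda+d\leq 2\lambda d$ to derive the bound directly). One cosmetic slip: the chain ``$2^{d-4}\geq 2^7=128>d$ for all $d\geq 11$'' is not literally valid for $d\geq 128$; what you need (and what is true) is simply that $\sigma^{d-4}\geq 2^{d-4}>d$ for all $d\geq 11$, since the left side doubles per unit increase of $d$ while the right side grows by one.
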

\begin{proof}
Let $W$ and $W'$ be shortest absent words of $X$ and $XY$, respectively. Further, let $d=|W'| -|W|$.  In order to have  $d>0$, all strings $UW$ for $U \in \Sigma^{d-1}$ must occur in $XY$, and hence in $X[|X|-|UW|+2 \dd |X|]\cdot Y$, since none of them occurs in $X$.
Lemma~\ref{lem:long} implies that $|Y|+\lambda+d > \lambda \cdot \sigma^{d-1}/4$.
Then, since $\lambda+d \leq 2\lambda d$ for any positive integers $\lambda, d$, we have $|Y|>\lambda \cdot(\sigma^{d-1}/4 -2d)$.
Assuming that $d\geq 10$, and since $\sigma \geq 2$, we conclude that $|Y|>\lambda \cdot \sigma^{d-1}/8$.
Consequently, $\log_{\sigma} (8|Y|/\lambda) +1> d$.
Since $\log_\sigma 8 \leq 3$ we get the claimed bound.
\end{proof}

\begin{lemma}\label{lem:rev}
If a shortest absent word of $XY$ is of length $m$, a shortest absent word of $X$ is of length $\lambda$,
and $|Y|\leq m \cdot \tau$, for a positive integer $\tau\geq 16$, then $m-\lambda \leq 10 + 2\log_{\sigma} \tau$.
\end{lemma}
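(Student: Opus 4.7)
The plan is to derive the stated bound as a consequence of \cref{lem:ext}. Setting $d := m - \lambda \ge 0$, applying \cref{lem:ext} to the decomposition $X, Y$ yields $d \le \max\{10,\, 4 + \log_\sigma(|Y|/\lambda)\}$. If the maximum equals $10$, then $d \le 10 \le 10 + 2\log_\sigma \tau$, since $\tau \ge 16$ forces $\log_\sigma \tau \ge 0$. Otherwise the bound reads $d \le 4 + \log_\sigma(|Y|/\lambda)$, and substituting $|Y| \le m\tau = (\lambda + d)\tau$ gives
\[
d \;\le\; 4 + \log_\sigma\!\left(1 + \tfrac{d}{\lambda}\right) + \log_\sigma \tau.
\]

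I would then split on whether $d \le \lambda$. In the easy case $d \le \lambda$, we have $1 + d/\lambda \le 2$; since $\sigma \ge 2$, this gives $\log_\sigma(1 + d/\lambda) \le 1$, so $d \le 5 + \log_\sigma \tau \le 10 + 2\log_\sigma \tau$. In the remaining case $d > \lambda \ge 1$, bounding $1 + d/\lambda < 2d$ yields $\log_\sigma(1 + d/\lambda) < 1 + \log_\sigma d$, leaving the implicit bound
\[
d - \log_\sigma d \;<\; 5 + \log_\sigma \tau.
\]

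The key (and only non-routine) step is converting this implicit inequality into an explicit bound on $d$. I would observe that the function $f(x) := x - \log_\sigma x$ is strictly increasing for $x \ge 2$, since $f'(x) = 1 - 1/(x \ln \sigma) > 0$ whenever $\sigma \ge 2$. It therefore suffices to verify
\[
f(10 + 2\log_\sigma \tau) \;\ge\; 5 + \log_\sigma \tau,
\]
which after rearrangement is equivalent to $\sigma^5 \tau \ge 10 + 2\log_\sigma \tau$. This is immediate under the hypotheses $\sigma \ge 2$ and $\tau \ge 16$: the left-hand side is at least $32 \tau \ge 512$, while the right-hand side is at most $10 + 2\log_2 \tau$, which is negligible in comparison. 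Monotonicity of $f$ then gives $d < 10 + 2\log_\sigma \tau$, finishing this case.

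The principal obstacle I anticipate is exactly this self-referential aspect: in the second case, the right-hand side of the bound produced by \cref{lem:ext} depends on $d$ through $m = \lambda + d$, so one cannot simply read off a bound on $d$. The constants $10$ and $2\log_\sigma \tau$ in the target appear to be tuned precisely so that a single application of the monotonicity argument, combined with $\tau \ge 16$, closes the inequality without any iterative refinement; a weaker hypothesis on $\tau$ would seemingly force a worse constant in the $\log_\sigma \tau$ term.
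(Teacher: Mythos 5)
Your proof is correct, and while it starts from the same place as the paper's --- both invoke Lemma~\ref{lem:ext} and immediately dispose of the case where the maximum equals $10$ --- it resolves the resulting implicit inequality by a genuinely different mechanism. The paper substitutes $|Y|\le m\tau$ to get $\lambda \ge m-\log_\sigma m-\log_\sigma\tau-4$, then \emph{bootstraps}: it splits on $m\le\tau$ versus $m>\tau$, feeds the crude lower bound on $\lambda$ back into $m-\lambda\le 4+\log_\sigma(m\tau/\lambda)$, and closes with the ad hoc numerical fact $m-2\log_\sigma m-4\ge m/5$ for $m\ge 16$. You instead isolate $d=m-\lambda$ from the outset, split on $d\le\lambda$ versus $d>\lambda$, and in the hard case reduce everything to the implicit bound $d-\log_\sigma d<5+\log_\sigma\tau$, which you invert via monotonicity of $f(x)=x-\log_\sigma x$ on $[2,\infty)$ together with the single check $\sigma^5\tau\ge 10+2\log_\sigma\tau$ (which indeed holds for all $\tau\ge 16$, since $32\tau$ dominates $10+2\log_2\tau$ there; note the inverted quantity $d$ satisfies $d\ge 2$ in this case because $d>\lambda\ge1$ and $d$ is an integer, so $f$ is applied on its monotone range). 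Your route is arguably cleaner in that the case split is on the quantity actually being bounded, and the ``self-referential'' step is handled by one transparent monotonicity argument rather than a two-stage substitution; the paper's route, on the other hand, yields the slightly sharper intermediate constants $4+2\log_\sigma\tau$ and $7+\log_\sigma\tau$ in its two subcases, though both arguments land comfortably within the stated $10+2\log_\sigma\tau$.
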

\begin{proof}
From \cref{lem:ext} we have $\lambda \in [m - \max\{10,4+\log_{\sigma} (|Y|/\lambda)\}, m]$.
If $\max\{10,4+\log_{\sigma} (|Y|/\lambda)\}=10$, then $m-\lambda \leq 10$ and we are done.

In the complementary case, since $|Y|\leq m \cdot \tau$, we get the following:
\[\lambda\geq m-\log_{\sigma}(m \cdot \tau/\lambda)-4\iff
\lambda \geq m + \log_{\sigma}\lambda-\log_{\sigma}m -\log_{\sigma}\tau -4.\]

In particular, $\lambda\geq m-\log_{\sigma}m -\log_{\sigma}\tau-4$.

From the above, if $m\leq \tau$, then $m-\lambda\leq 4 + 2 \log_{\sigma}\tau$.

In what follows we assume that $m>\tau\geq 16$.
Rearranging the original equation, and since $\log_\sigma(\cdot)$  is an increasing function and $\lambda \geq m-\log_{\sigma}m-\log_{\sigma}\tau -4$, we have
\begin{multline*}
m-\lambda \leq 4+\log_{\sigma}(m \cdot \tau/\lambda) \leq  4+\log_{\sigma}\left(\frac{m}{m-\log_{\sigma}m-\log_{\sigma}\tau-4}\right)+ \log_{\sigma}\tau\\
\leq 4+\log_{\sigma}\left(\frac{m}{m-2\log_{\sigma}m-4}\right)+ \log_{\sigma}\tau.
\end{multline*}
Then, we have $m-2\log_{\sigma}m-4\geq m/5$ since, for any $\sigma \geq 2$, $4x/5-2\log_{\sigma}x-4$ is an increasing function on $[16,\infty)$ and positive for $x=16$.
Hence, $m-\lambda \leq 4 + \log_\sigma 5+\log_\sigma \tau\leq 7 +\log_\sigma \tau$.

By combining the bounds on $m-\lambda$ we get the claimed bound.
\end{proof}

\section{\boldmath $\cO(n)$ Space and $\cO(1)$ Query Time}\label{sec:main}

Our linear-space solution of the \ISAW{} problem with constant query time is an optimization of the $\cO(n\log_\sigma n)$-space solution from Section~\ref{sec:constant} with some ``boundary'' cases processed using the data structure of Section~\ref{sec:loglog}. Let us first describe a simpler $\cO(n \log_\sigma \log n)$-space data structure, which will be then optimized using the combinatorial insights from Section~\ref{sec:comb}.

Recall that we denote by $\ell$ the length of a shortest absent word of $T$. The issue with the solution of Section~\ref{sec:constant} is that the 2d array $\FTR[1\dd \ell-1][1\dd n]$, equipped with fusion trees, occupies $\cO(n \log_\sigma n)$ space. In order to reduce the memory consumption, we store the array $\FTR$ implicitly, computing its entries on demand, and utilize succinct fusion trees from Theorem~\ref{thm:fusion-succinct} instead of usual fusion trees.

Recall that $\FTR[j][i]$ is the rightmost index of $T$ such that $T[\FTR[j][i]\dd i]$ contains as substrings all strings of length $j$ over $\Sigma$ and it is equal to $\min\{\PRE_j[i-j+2],\ldots,\PRE_j[|\PRE_j|]\}$. Therefore, the content of the 2d array $\FTR$ can be ``emulated'' without storing it explicitly if one can compute in $\cO(1)$ time the minima $\min\{\PRE_j[a], \ldots,\PRE_j[|\PRE_j|]\}$, for any $a \in [1,n]$. For $j \in [1,\ell-1]$ and $a \in [1,n]$, denote $M_{j,a} = \min\{\PRE_j[a], \ldots,\PRE_j[|\PRE_j|]\}$. Let us fix some $j$. Since the sequence $M_{j,1}, M_{j,2}, \ldots, M_{j,n}$ is non-decreasing, we can encode it in a $2n$-bit array $B_j$ using the select data structure from Theorem~\ref{th:rs} as follows: we construct $B_j$ (initially empty) by considering $a = 1,2,\ldots,n$ in increasing order and, for each $a$, we append to the end of $B_j$ exactly $M_{j,a} - M_{j,a-1}$ zeroes followed by~1, setting $M_{j,0}=0$ (i.e., we append the number $M_{j,a} - M_{j,a-1}$ written in unary); then, we have $M_{j,a} = \select_1(a, B_j) - a$.

\begin{example}
We consider $T$ from Example~\ref{ex:con} and $j=2$.

\vspace{0.2cm}

\scalebox{0.97}{
\begin{tabular}{l|l*{25}{p{5pt}}l@{}} 
     $i$&\tt1&\tt2&\tt3&\tt4&\tt5&\tt6&\tt7&\tt8&\tt9&\tt10&\tt11 &\tt12&\tt13&\tt14&\tt15&\tt16&\tt17&\tt18&\tt19&\tt20&\tt21 \\
     $T$&\tt a&\tt b&\tt a&\tt a&\tt b&\tt a&\tt a&\tt a&\tt b&\tt b&\tt a & \tt b&\tt b&\tt b &\tt a&\tt a&\tt a&\tt b \\

     \hline
     $\APP_2$&\tt 2&\tt 3&\tt 1&\tt 2&\tt 3&\tt 1&\tt 1&\tt 2&\tt 4&\tt 3&\tt 2&\tt 4&\tt 4&\tt 3&\tt 1&\tt 1&\tt 2&\tt 1&\tt 2&\tt 3&\tt 4\\
     $\PRE_2$&\tt 0&\tt 0&\tt 0&\tt 1&\tt 2&\tt 3&\tt 6&\tt 4&\tt 0&\tt 5&\tt 8&\tt 9&\tt 12&\tt 10&\tt 7&\tt 15&\tt 11&\tt 16&\tt 17&\tt 14&\tt 13\\
     $M_{2,i}$&\tt 0&\tt 0&\tt 0&\tt 0&\tt 0&\tt 0&\tt 0&\tt 0&\tt 0&\tt 5&\tt 7&\tt 7&\tt 7&\tt 7&\tt 7&\tt 11&\tt 11&\tt 13
\end{tabular}}

\vspace{0.2cm}

In this case, we have $B_{2} = \tt 1\tt 1\tt 1\tt 1\tt 1\tt 1\tt 1\tt 1\tt 1\tt 0\tt 0\tt 0\tt 0\tt 0\tt 1\tt 0\tt 0\tt 1\tt 1\tt 1\tt 1\tt 1\tt 0\tt 0\tt 0\tt 0\tt 1\tt 1\tt 0\tt 0\tt 1$.
\end{example}

Besides access to the 2d array $\FTR$, the algorithm of Section~\ref{sec:constant} also required access to the values $\argmin\{\PRE_j[a], \ldots,\PRE_j[|\PRE_j|]\}$ in order to retrieve a witness shortest absent word. To this end, we build the $2n$-bit RMQ data structure from Theorem~\ref{th:rmq} on each array $\PRE_j$; the data structure does not need to store the array $\PRE_j$ itself to compute $\argmin$. The arrays $B_j$, for $j \in [1,\ell-1]$, equipped with select data structures, and the RMQ data structures on arrays $\PRE_j$, for $j \in [1,\ell-1]$, can be constructed in total $\cO(n \ell) = \cO(n\log_\sigma n)$ time and they altogether occupy $\cO(n\log_\sigma n)$ \emph{bits} of space, which is $\cO(n)$ space when measured in machine words.

To answer a query $[a,b]$, it suffices to find the smallest $j$ such that $\FTR[j][b]<a$. We do this by finding where the predecessor of $a$ lies in $\FTR[1\dd \ell-1][b]$.
To this end, we constructed $n$ fusion trees: one per $\FTR[1 \dd \ell-1][i]$, resulting in a data structure of size $\Theta(n\ell)=\cO(n\log_\sigma n)$ with $\cO(1)$ query time. But now we do not store the arrays $\FTR[1 \dd \ell-1][i]$ explicitly, while still having $\cO(1)$-time ``oracle'' access to their entries on demand. Hence, we can construct a succinct fusion tree of Theorem~\ref{thm:fusion-succinct}, for each array $\FTR[1 \dd \ell-1][i]$, which takes $\cO(\ell \log\log n)$ \emph{bits} of space since the size of machine words is $w = \Theta(\log n)$ bits (a shared table mentioned in Theorem~\ref{thm:fusion-succinct} is also precomputed for all the trees in $o(2^w) = o(n)$ time).

Thus, all the succinct fusion trees can be constructed in $\cO(n\log_\sigma n)$ time and occupy $\cO(n\log_\sigma n\log\log n)$ \emph{bits}, which is $\cO(n\log_\sigma\log n)$ space when measured in $\Theta(\log n)$-bit machine words. The \ISAW{} queries are answered in $\cO(1)$ time by the same algorithm as in Section~\ref{sec:constant}.

\medskip

Now we are to further reduce the memory usage of the data structure. We truncate all the arrays $\FTR[0 \dd \ell-1][i]$ except those where $i$ is a multiple of $\lfloor\log\log n\rfloor$ or $i=n$: namely, if $i$ is a multiple of $\lfloor\log\log n\rfloor$ or $i=n$, then the succinct fusion tree for the whole array $\FTR[1 \dd \ell-1][i]$ is stored, occupying $\cO(\ell \log\log n)$ bits, by Theorem~\ref{thm:fusion-succinct}; otherwise ($i\neq n$ is not a multiple of $\lfloor \log\log n\rfloor$), we store the succinct fusion tree only for the subarray $\FTR[1 \dd \lceil \log n / \log\log n\rceil][i]$, thus taking $\cO(\log n)$ bits,  by Theorem~\ref{thm:fusion-succinct}. In total, the space used is $\cO(\frac{n}{\log\log n}\ell\log\log n  + n\log n) = \cO(n\log n)$ in bits or $\cO(n)$ in words.

In order to answer an \ISAW{} query for $T[a \dd b]$, we first check whether the length $\lambda$ of a shortest absent word in $T[a \dd b]$ is smaller than $\log n / \log\log n$ by querying the fusion tree of $\FTR[1 \dd \lceil\log n / \log\log n\rceil][b]$. If it is the case, then we have computed the length $\lambda$ and we find the absent word itself using RMQs exactly as in the $\cO(n\log_\sigma\log n)$-space solution described above.

Suppose that $\lambda \geq \log n / \log\log n$.
We compute $b'$, the successor of $b$ among the positions $i$ for which we have not truncated $\FTR[1 \dd \ell-1][i]$: $b' = \min\{n,\lceil b / \lfloor\log\log n\rfloor\rceil \cdot \lfloor\log\log n\rfloor\}$. Observe that $[a,b] \subseteq [a,b']$. Then, using the fusion tree of $\FTR[1 \dd \ell-1][b']$, we compute the smallest $m$ such that $\FTR[m][b'] < a$.
Then, $m$ is the length of a shortest absent word in $T[a \dd b']$. Denote $X = T[a \dd b]$ and $T[a \dd b'] = XY$ where $Y$ is a suffix of $T[a\dd b']$ of length $b' - b$. We obviously have $\lambda \leq m$. Since $|Y| = b' - b < \log\log n$ and $m \geq \log n / \log\log n$, we have $|Y| < m$. It follows from Lemma~\ref{lem:rev} that the answer $\lambda$ is within a range of length $18$ from $m$. Therefore, $\lambda$ belongs to the range $[m-18, m]$ and we can find it in $\cO(1)$ time using $\cO(1)$ queries of the $\cO(n)$-space data structure encapsulated by~\cref{lem:fixedj}. We thus arrive at the main result of the paper.

\mainthm*

\section{Open Problems}\label{sec:fin}

It remains open whether a data structure for the \ISAW{} problem with the same query time and space complexities as the one encapsulated in Theorem~\ref{the:main} can be constructed in linear time. Also, it is natural to pose the following related open problem, which may require the development of fundamentally different techniques. Given a string $T$ of length $n$ over an alphabet $\Sigma\subset \{1,2,\ldots,n^{\cO(1)}\}$, preprocess $T$ so that given a range $[i,j]$, we can return a representation of a shortest string over $\Sigma_{[i,j]}$ that is absent in the fragment $T[i]\cdots T[j]$ of~$T$, where $\Sigma_{[i,j]}$ is the set of letters from $\Sigma$ occurring in the fragment $T[i]\cdots T[j]$.

\bibliography{references}

\end{document}